\title{\LARGE \bf
Top-Down Synthesis of Multi-Agent Formation Control:\\
An Eigenstructure Assignment based Approach
}
\author{Takatoshi Motoyama and Kai Cai
\thanks{The authors are with Dept. of Electrical and Information Engineering, Osaka City University, Japan. %
Emails: motoyama@c.info.eng.osaka-cu.ac.jp, kai.cai@eng.osaka-cu.ac.jp. 
}
}
\newtheorem{thm}{Theorem}
\newtheorem{lem}{Lemma}
\newtheorem{prop}{Proposition}
\newtheorem{exmp}{Example}
\newtheorem{prob}{Problem}
\begin{document}

\maketitle
\thispagestyle{empty}
\pagestyle{empty}

\begin{abstract}
We propose a top-down approach for formation control of heterogeneous multi-agent systems, based on the method of {\it eigenstructure assignment}.  Given the problem of achieving scalable formations on the plane, our approach globally computes a state feedback control that assigns desired closed-loop eigenvalues/eigenvectors.  We characterize the relation between the eigenvalues/eigenvectors and the resulting inter-agent communication topology, and design special (sparse) topologies such that the synthesized control may be implemented locally by the individual agents. 
Moreover, we present a hierarchical synthesis procedure that significantly improves computational efficiency.  Finally, we extend the proposed approach to achieve rigid formation and circular motion, and illustrate these results by simulation examples.
\end{abstract}  
\section{Introduction}


Cooperative control of multi-agent systems has been an active research area in the systems control community \cite{JadLinMor:03,SaFaMu:07,RenBea:08,BulCorMar:09,MesEge:10}. Among many problems, {\it formation control} has received much attention \cite{AndYuFidHen:08} owing to its wide applications such as satellite formation flying, search and rescue, terrain exploration, and foraging. 
A main problem studied is stabilization to a {\it rigid} formation, where the goal is to steer the agents to achieve a formation with a specified size and only freedoms of translation and rotation. Several control strategies have been proposed: affine feedback laws \cite{FaxMur:04,Cor:09}, nonlinear gradient-based control \cite{KriBroFra:09,CaoMorYuAnd:11}, and angle-based algorithms \cite{BasBisJen:10}.  Achieving a {\it scalable} formation with unspecified size has also been studied \cite{CooArc:12,LinWanHanFu:14}; a scalable formation may allow the group to adapt to unknown environment with obstacles or targets. In addition, \cite{BaiArcWen:08,DinYanLin:10} have presented methods of controlling formations in motion. 

These different methods for formation control have a common feature in design: namely {\it bottom-up}.  Specifically, the inter-agent communication topology is given {\it a priori}, which defines the neighbors for each agent.  Then based only on the neighborhood information, local control strategies are designed for the individual agents.  The properties of the designed local strategies are finally analyzed at the systemic (i.e. global) level, and correctness is proved under certain graphical conditions on the communication topology. This bottom-up design is indeed the mainstream approach for cooperative control of multi-agent systems that places emphasis on {\it distributed control}.

In this paper, we propose a distinct, {\it top-down} approach for formation control, based on a known method called {\it eigenstructure assignment} \cite{Moore:76,Moore:77,AndShaChu:83,LiuPat:98}.  Different from the bottom-up approach, here there need not be any communication topology imposed {\it a priori} (in fact the agents are typically assumed independent, i.e. uncoupled), and no design will be done at the local level. Indeed, given a multi-agent formation control problem characterized by specific eigenvalues and eigenvectors (precisely defined in Section~II), our approach constructs on the global level a feedback matrix (if it exists) that renders the closed-loop system to possess those desired eigenvalues/eigenvectors, thereby achieving desired formations. Moreover, the synthesized feedback matrix (its off-diagonal entries being zero or nonzero) defines the communication topology, and accordingly the computed feedback control may be implemented by individual agents. Thus our approach features ``compute globally, implement locally''.

The inter-agent communication topology is a {\it result} of control synthesis, rather than given {\it a priori}. 
We characterize the relation between the resulting topology and the eigenstructure chosen for the synthesis.
Further, we show that by appropriately choosing desired eigenvalues and the corresponding eigenvectors, special topologies (star, cyclic, line) can be designed, and the computed feedback control may be implemented locally over these (sparse) topologies.

Although our method requires 
centralized computation of control gain matrices, we show that a straightforward extension of the approach to a {\it hierarchical} synthesis procedure significantly reduces computation time. Empirical evidence is provided to show the efficiency of the proposed hierarchical synthesis procedure; in particular, computation of a feedback control for a group of 1000 agents needs merely a fraction of a second, which is likely to suffice for many practical purposes.

The main advantage of our top-down approach is that it is {\it systematic}, in the sense that it treats heterogeneous agent dynamics and different cooperative control specifications (characterizable by desired eigenstructure) by the same synthesis procedure. We show that scalable formation, rigid formation, and cooperative circular motion can all be addressed using the same method. Additionally, we show that this method is amenable to deal with more general cases where some agents are not self-stabilizable and the initial inter-agent connections are arbitrary.


We first proposed this eigenstructure assignment based approach in \cite{CaiMot:15}, where we applied it to solve the consensus problem. 
Then the conference precursor \cite{MotCai:ACC17} of this paper extended the approach to solve scalable and rigid formations, and proposed a hierarchical synthesis procedure to significantly shorten the computation time.  This paper differs from \cite{MotCai:ACC17} in the following aspects. (i) A precise relation between eigenstructure and topology is characterized (Section~III). (ii) A method for imposing topological constraints on eigenstructure assignment is presented (Section~III.B). (iii) More general cases where the initial inter-agent topology is arbitrary and/or there exist non-stabilizable agents are addressed (Section~IV). (iv) The problem of achieving cooperative circular motion is solved. (v) All the proofs are provided.

We note that \cite{WuIwa:15} also proposed an eigenstructure assignment method and applied it to the multi-agent consensus problem. Their approach is bottom-up: first a communication topology is imposed among the agents, then local control strategies are designed based on eigenstructure assignment respecting the topology, and finally the correctness of the proposed strategies is verified at the global level. By contrast, our approach is top-down: no topology is imposed {\it a priori}, and topology is a result of control synthesis. Moreover, we characterize the relation between topology and eigenstructure, and design special topologies by selecting special eigenstructures. In addition, the problems addressed in the paper are distinct, namely scalable/rigid formation and circular motion on the plane, which involve complex eigenvalues and eigenvectors.

The rest of the paper is organized as follows.
In Section~II we review the basics of eigenstructure assignment and formulate the multi-agent formation control problem. In Section~III we solve the problem by eigenstructure assignment, and discuss the relations between eigenvalues/eigenvectors and topologies. In Section~IV we study the more general cases where the initial inter-agent topology is arbitrary and/or there exist non-stabilizable agents.
In Section~V we present a hierarchical synthesis procedure to reduce computation time, and in Section VI extend the method to achieve rigid formation and circular motion. Simulation examples are given in Section~VII and our conclusions stated in Section~VIII.


\section{PRELIMINARIES AND PROBLEM FORMULATION}
\subsection{Preliminaries on Eigenstructure Assignment}
First we review the basics of eigenstructure assignment~\cite{Moore:76}.
Consider a linear time-invariant finite-dimensional system modeled by
\begin{align}
\label{eq:standard dynamics}
\dot{x}=Ax + Bu
\end{align}
where $x\in\mathbb{C}^n$ is the state vector, $u\in\mathbb{C}^m$ the input vector, and $A \in\mathbb{R}^{n\times n}$, $B \in\mathbb{R}^{n\times m}$.

Suppose we modify (\ref{eq:standard dynamics}) by state feedback $u=Fx$.
It is well-known (e.g. \cite{Won:67}) that
$F$ may be chosen to assign any (self-conjugate) set of closed-loop eigenvalues for $\dot{x}=(A+BF)x$ if and only if $(A,B)$ is controllable. 
Unless $m = 1$ (single input), however, $F$ is not uniquely determined by a set of closed-loop eigenvalues. Indeed, with state feedback $F$ one has additional freedom to assign certain sets of closed-loop eigenvectors. 
Simultaneously assigning both eigenvalues and eigenvectors is referred to as {\it eigenstructure assignment}.

Let $\lambda \in \mathbb{C}$. 
It is shown in~\cite{Moore:76} that if $(A,B)$ is controllable, then there exists

\begin{align}
\label{eq:kernel}
N(\lambda) :=
\begin{bmatrix}
  N_1(\lambda) \\
  N_2(\lambda)
\end{bmatrix}
\in \mathbb{C}^{(n+m) \times m}
\end{align}
with linearly independent columns such that

\begin{align}\label{eq:kernel_eq}
\begin{bmatrix}
\lambda I - A \ B
\end{bmatrix}\begin{bmatrix}
  N_1(\lambda) \\
  N_2(\lambda)
\end{bmatrix} = 0.
\end{align}
Thus the columns of $N(\lambda)$ form a basis of Ker$[\lambda I-A \ \ B]$; Ker denotes {\it kernel}. Also we will use Im to denote {\it image}.

\begin{lem}(\cite{Moore:76}) 
\label{lem:dis_eig}
Consider the system~(\ref{eq:standard dynamics}) and suppose that $(A,B)$ is controllable and Ker$B=0$. 
Let $\{\lambda_1,\ldots,\lambda_n\}$ be a set of \emph{distinct} complex numbers, and $\{v_1,\ldots,v_n\}$ a set of linearly independent vectors in $\mathbb{C}^n$. 
Then there is a unique $F$ such that for every $i\in[1,n]$, $(A+BF)v_i =
\lambda_i v_i$ if and only if
\begin{align} \label{eq:lemma1}
(\forall i \in [1,n]) v_i \in \mbox{Im} N_1(\lambda_i)
\end{align}
where $N_1(\cdot)$ is in (\ref{eq:kernel}).
\end{lem}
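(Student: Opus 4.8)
The plan is to convert the eigenvector requirement into a kernel-membership condition. For a prospective feedback $F$, introduce the auxiliary vector $w_i := F v_i$. Then $(A+BF)v_i = \lambda_i v_i$ is equivalent to $(\lambda_i I - A)v_i - B w_i = 0$, i.e. to the stacked vector $\begin{bmatrix} v_i \\ -w_i \end{bmatrix}$ lying in $\mathrm{Ker}[\lambda_i I - A \ B]$. Comparing with the defining identity (\ref{eq:kernel_eq}), this kernel is exactly $\mathrm{Im}\,N(\lambda_i)$: controllability guarantees, via the PBH rank test, that $[\lambda I - A \ B]$ has full row rank $n$ for every $\lambda$, so the kernel has dimension $m$ and is therefore spanned by the $m$ independent columns of $N(\lambda_i)$. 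This identification is the backbone of both directions.

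For necessity, I would take any $F$ realizing all $n$ eigen-relations, set $w_i = F v_i$, and read off from $\begin{bmatrix} v_i \\ -w_i \end{bmatrix} = N(\lambda_i)c_i$ (for some $c_i \in \mathbb{C}^m$) that $v_i = N_1(\lambda_i)c_i \in \mathrm{Im}\,N_1(\lambda_i)$, which is precisely (\ref{eq:lemma1}).

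For sufficiency together with uniqueness, I would assume (\ref{eq:lemma1}) and write $v_i = N_1(\lambda_i)c_i$, then define $w_i := -N_2(\lambda_i)c_i$. Multiplying (\ref{eq:kernel_eq}) on the right by $c_i$ gives $(\lambda_i I - A)v_i = B w_i$, i.e. $A v_i + B w_i = \lambda_i v_i$. Because $\{v_1,\ldots,v_n\}$ are $n$ linearly independent vectors in $\mathbb{C}^n$, the matrix $V := [v_1 \ \cdots \ v_n]$ is invertible, so setting $F := W V^{-1}$ with $W := [w_1 \ \cdots \ w_n]$ yields $F v_i = w_i$ and hence $(A+BF)v_i = \lambda_i v_i$ for all $i$, proving existence. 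For uniqueness, any $F'$ satisfying the eigen-relations obeys $B(F' v_i) = (\lambda_i I - A)v_i = B w_i$, so $B(F' v_i - w_i) = 0$; the hypothesis $\mathrm{Ker}\,B = 0$ then forces $F' v_i = w_i = F v_i$ for every $i$, and invertibility of $V$ gives $F' = F$.

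I expect the delicate points to be bookkeeping rather than conceptual. The first is verifying that $\mathrm{Im}\,N(\lambda_i)$ exhausts $\mathrm{Ker}[\lambda_i I - A \ B]$ (so that the necessity conclusion is legitimate), which is exactly where controllability enters through the PBH dimension count; the second is recognizing that uniqueness of $F$ reduces to uniqueness of each $w_i = F v_i$, and that injectivity of $B$, namely $\mathrm{Ker}\,B = 0$, is precisely the hypothesis that supplies it. It is worth noting that distinctness of the $\lambda_i$ plays no role in this particular argument beyond what linear independence of the $v_i$ already guarantees.
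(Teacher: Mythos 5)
Your proof is correct. The paper does not actually prove this lemma---it cites it from Moore (1976)---but your sufficiency-and-uniqueness construction is exactly the procedure (i)--(iii) the paper gives immediately after the lemma (taking $w_i = -N_2(\lambda_i)k_i$ and $F = [w_1 \cdots w_n][v_1 \cdots v_n]^{-1}$), and you correctly identify the two places where the hypotheses do real work: controllability via the PBH rank test to ensure $\mathrm{Ker}[\lambda I - A \ \ B] = \mathrm{Im}\,N(\lambda)$ has dimension $m$, and $\mathrm{Ker}\,B = 0$ to pin down $Fv_i$ uniquely; your closing remark that distinctness of the $\lambda_i$ is not used is also right, and is implicitly exploited later in the paper (footnote on repeated zero eigenvalues with non-generalized eigenvectors in the rigid-formation section).
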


Lemma~\ref{lem:dis_eig} provides a necessary and sufficient condition of eigenstructure assignment.
When the condition holds and thus $F$ exists for assigning distinct complex eigenvalues $\lambda_i$ and the corresponding eigenvectors $v_i$ ($i\in[1,n]$), $F$ may be constructed by the following procedure~\cite{Moore:76}.

\medskip

(i) For each $\lambda_i$ compute a basis of
Ker$[\lambda_i I - A \ B]$. Stack the basis vectors to form
$N(\lambda_i)$ in (\ref{eq:kernel}); partition $N(\lambda_i)$ properly
to get $N_1(\lambda_i)$ and $N_2(\lambda_i)$.

(ii) Find $w_i = -N_2(\lambda_i) k_i$, where $k_i \in \mathbb{C}$ is such that
$N_1(\lambda_i) k_i = v_i$ (the condition Ker$B=0$ in
Lemma~\ref{lem:dis_eig} ensures that $N_1(\lambda_i)$ has
independent columns; thus $k_i$ may be uniquely determined).

(iii) Compute $F$ by
\begin{align} \label{eq:dis_real}
F = [w_1 \cdots w_n] [v_1 \cdots v_n]^{-1}.
\end{align}

Note that the entries of $F$ may include complex numbers in general.
If $\{\lambda_1,\ldots,\lambda_n\}$ is a {\it self-conjugate} set of distinct complex numbers, and $v_i=\bar{v}_j$ wherever $\lambda_i=\bar{\lambda}_j$ ($\bar{\cdot}$ denotes complex conjugate), then all entries of $F$ are real numbers.

The procedure (i)-(iii) of computing $F$ has complexity $O(n^3)$, inasmuch as the calculations involved are solving systems of linear equations, matrix inverse and multiplication (e.g. \cite{GolLoa:96}).

We note that the above eigenstructure assignment result may be extended to the case of repeated eigenvalues with generalized eigenvectors.
For details refer to~\cite{Moore:77} or Appendix.

\subsection{Problem Formulation}
Consider a heterogeneous multi-agent system where each agent is modeled by a first-order ODE:

\begin{align}\label{eq:1st-order ODE}
\dot{x}_i = a_ix_i + b_iu_i,~~~i=1,\ldots,n.
\end{align}
Here $x_i \in \mathbb{C}$ is the state variable, $u_i \in \mathbb{C}$ the control variable, $a_i \in \mathbb{R}$ and $b_i (\ne 0) \in \mathbb{R}$ are constant parameters.
Thus each agent is a point mass moving on the complex plane, with possibly stable ($a_i<0$), semistable ($a_i=0$) or unstable ($a_i>0$) dynamics.
The requirement $b_i\ne0$ is to ensure stabilizability$/$controllability of ($a_i, b_i$); thus each agent is stabilizable/controllable.
Note that represented by~(\ref{eq:1st-order ODE}), the agents are independent (i.e. uncoupled) and no inter-agent topology is imposed at this stage.

In vector-matrix form, the system of $n$ independent agents is 
\begin{align} \label{eq:linear system}
\dot{x} = Ax+Bu
\end{align}
where $x:=[x_1 \cdots x_n]^\top \in\mathbb{C}^n$, $u:=[u_1 \cdots u_n]^\top \in\mathbb{C}^n$, $A :=$ diag$(a_1,\ldots,a_n)$ and $B :=$ diag$(b_1,\ldots,b_n)$; here diag$(\cdot)$ denotes a diagonal matrix with the specified diagonal entries.
Consider modifying (\ref{eq:linear system}) by a state feedback $u=Fx$ and thus the closed-loop system is
\begin{align} \label{eq:closed loop system}
\dot{x}=(A+BF)x.
\end{align}
Straightforward calculation shows that the diagonal entries of $A+BF$ are $a_i+b_iF_{ii}$, and the off-diagonal entries $b_iF_{ij}$.
Since $b_i\ne 0$, the off-diagonal entries $(A+BF)_{ij}\ne 0$  if and only if $F_{ij}\ne 0~(i\ne j)$.

In view of the structure of $A+BF$, we can define a corresponding {\it directed graph} $\mathcal{G}=(\mathcal{V},\mathcal {E})$ as follows: 
the \emph{node} set $\mathcal{V}:=\{1, \ldots, n\}$ with node $i \in \mathcal {V}$ standing for agent $i$ (or state $x_i$); 
the \emph{edge} set $\mathcal {E}\subseteq \mathcal {V} \times \mathcal {V}$ with edge $(j,i) \in \mathcal {V}$ if and only if $F$'s off-diagonal entry $F_{ij} \neq 0$ . 
Since $F_{ij} \neq 0$ implies that $x_i$ uses $x_j$ in its state update, we say for this case that agent $j$ {\it communicates} its state $x_j$ to agent $i$, or $j$ is a {\it neighbor} of $i$. 
The graph $\mathcal {G}$ is therefore called a communication network among agents, whose topology is decided by the off-diagonal entries of $F$.
Thus the communication topology is not imposed {\it a priori}, but emerges as the result of applying the state feedback control $u=Fx$.

Now we define the formation control problem of the multi-agent system~(\ref{eq:linear system}).
\begin{prob}
\label{prob:formation problem}
Consider the multi-agent system~(\ref{eq:linear system}) and specify a vector $f\in\mathbb{C}^n$ ($f \neq 0$).
Design a state feedback control $u=Fx$ such that for every initial condition $x(0)$, $\lim_{t\to \infty}x(t)=cf$ for some constant $c\in\mathbb{C}$.
\end{prob}

In Problem~\ref{prob:formation problem}, the specified vector $f$ represents a desired {\it formation configuration} in the plane. 
By formation configuration we mean that the geometric information of the formation remains when scaling and rotational effects are discarded.
Indeed, by writing the constant $c\in\mathbb{C}$ in the polar coordinate form (i.e. $c=\rho e^{j\theta}$, $j=\sqrt{-1}$), the final formation $cf$ is the configuration $f$ scaled by $\rho$ and rotated by $\theta$. The constant $c$ is unknown {\it a priori} and in general depends on the initial condition $x(0)$.
Note also that Problem~1 includes the consensus problem as a special case when $f={\bf 1}:=[1\cdots 1]^\top$.

To solve Problem~\ref{prob:formation problem}, we note the following fact.
\begin{prop}
\label{prop:formation problem}
Consider the multi-agent system~(\ref{eq:linear system}) and state feedback $u=Fx$. If $A+BF$ has a simple eigenvalue $0$, with the corresponding eigenvector $f$, and other eigenvalues have negative real parts, then for every initial condition $x(0)$, $\lim_{t\to \infty}x(t)=cf$ for some $c\in\mathbb{C}$.
\end{prop}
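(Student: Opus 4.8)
The plan is to analyze the asymptotic behavior of the closed-loop solution $x(t)=e^{(A+BF)t}x(0)$ through the spectral (Jordan) structure of the matrix $M:=A+BF$. Writing $M=TJT^{-1}$ for its Jordan canonical form, I would order the blocks so that $J=\mathrm{diag}(0,J')$, where the leading block is the simple eigenvalue $0$ and $J'$ collects the Jordan blocks associated with the remaining eigenvalues, all of which have negative real parts. Because $0$ is \emph{simple}, its block is exactly $1\times 1$; this is the crucial structural point, since a larger Jordan block at $0$ would introduce a term growing like $t$ and destroy convergence. The first column of $T$ may be taken to be the eigenvector $f$ (as $Mf=0$ and the $0$-eigenspace is one-dimensional), and I would denote by $g^\top$ the first row of $T^{-1}$; from $T^{-1}M=JT^{-1}$ one checks that $g^\top M=0$, so $g^\top$ is a left eigenvector of $M$ for eigenvalue $0$, normalized by $g^\top f=1$.

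First I would compute $e^{Mt}=T\,e^{Jt}\,T^{-1}=T\,\mathrm{diag}(1,e^{J't})\,T^{-1}$. The core estimate is that $e^{J't}\to 0$ as $t\to\infty$: each diagonal block of $e^{J't}$ has the form $e^{\lambda t}$ times a matrix whose entries are polynomials in $t$ arising from the nilpotent part of the Jordan block, and since $\mathrm{Re}(\lambda)<0$ every such entry $t^{p}e^{\lambda t}$ tends to $0$. This is the step that requires the most care, as it is precisely where the negative-real-part hypothesis on the non-zero eigenvalues is used, and it must accommodate possibly non-diagonalizable stable modes.

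Having established $e^{J't}\to 0$, it follows that $e^{Mt}\to T\,\mathrm{diag}(1,0)\,T^{-1}=f g^\top$, a rank-one projector onto $\mathrm{span}\{f\}$ along the stable subspace. Therefore $x(t)=e^{Mt}x(0)\to f\,(g^\top x(0))=c\,f$ with $c:=g^\top x(0)\in\mathbb{C}$, which is exactly the claim; note that $c$ depends on $x(0)$ through the left eigenvector $g^\top$, consistent with the earlier remark that $c$ is unknown a priori. The main obstacle is purely the uniform decay of the stable modes in the presence of Jordan blocks; everything else follows directly from separating the simple zero eigenvalue from the rest of the spectrum.
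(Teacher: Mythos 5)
Your argument is correct and is exactly the route the paper takes: the paper invokes ``standard linear systems analysis'' to conclude $x(t)\to (w^\top x(0))f$ with $w$ the left eigenvector for the eigenvalue $0$, and your Jordan-form computation of $e^{Mt}\to f g^\top$ is precisely the detailed justification of that step. No discrepancy to report.
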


\begin{proof}
The solution of the closed-loop system~(\ref{eq:closed loop system}) is $x(t) = e^{(A+BF)t}x(0)$. 
Since $A+BF$ has a simple eigenvalue $0$, with the corresponding eigenvector $f$, and other eigenvalues have negative real parts, 
it follows from the standard linear systems analysis that $x(t) \rightarrow (w^\top x(0)) f $ as $t \rightarrow \infty$. Here $w \in \mathbb{C}^n$ is the left-eigenvector of $A+BF$ with respect to the eigenvalue $0$. 
Therefore $\lim_{t\to\infty}{x(t)}=cf$, where $c:=w^\top x(0)$.
\end{proof}

In view of Proposition~\ref{prop:formation problem}, if the specified eigenvalues and the corresponding eigenvectors may be assigned by state feedback $u=Fx$, then Problem~1 is solved.
To this end, we resort to eigenstructure assignment.

\section{Main Results}
In this section, we solve Problem~1, the formation control problem of multi-agent systems, by the method of eigenstructure assignment.  The following is our first main result.
\begin{thm} \label{thm:form}
Consider the multi-agent system~(\ref{eq:linear system}) and let $f$ be a desired formation configuration.
Then there always exists a state feedback control $u=Fx$ that solves Problem~1, i.e.
\begin{align*}
(\forall x(0)\in\mathbb{C}^n)(\exists c\in\mathbb{C})
\lim_{t\to\infty} x(t)=cf.\\
\end{align*}
\end{thm}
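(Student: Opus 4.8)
The plan is to reduce Problem~1 to an eigenstructure assignment problem via Proposition~\ref{prop:formation problem}, and then invoke Lemma~\ref{lem:dis_eig} to guarantee that the required feedback $F$ always exists. First I would fix a target eigenstructure. Choose a self-conjugate set of $n$ \emph{distinct} complex numbers $\{\lambda_1,\ldots,\lambda_n\}$ with $\lambda_1=0$ and $\mathrm{Re}(\lambda_i)<0$ for $i\in[2,n]$ (for concreteness, $n-1$ distinct negative reals). For the eigenvectors, set $v_1=f$ and pick $v_2,\ldots,v_n$ so that $\{f,v_2,\ldots,v_n\}$ is a linearly independent set in $\mathbb{C}^n$; this is possible since $f\neq 0$. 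I would additionally arrange $v_i=\bar{v}_j$ whenever $\lambda_i=\bar{\lambda}_j$, so that the resulting $F$ comes out real, as noted after the construction (i)--(iii).

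Next I would check the hypotheses of Lemma~\ref{lem:dis_eig} for $(A,B)$. Since $A=\text{diag}(a_1,\ldots,a_n)$ and $B=\text{diag}(b_1,\ldots,b_n)$ with every $b_i\neq 0$, each scalar pair $(a_i,b_i)$ is controllable, and because $A,B$ are diagonal, $(A,B)$ is controllable; moreover $\text{Ker}\,B=0$ since $B$ is diagonal with nonzero diagonal entries. It then remains to verify the key condition $v_i\in\text{Im}\,N_1(\lambda_i)$ for each $i$.

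This final condition is where one would normally expect the difficulty in an eigenstructure assignment argument, but here it is trivially satisfied because the system has \emph{full control authority} ($m=n$ with $B$ invertible). Indeed, the kernel equation $(\lambda I-A)N_1(\lambda)+B\,N_2(\lambda)=0$ admits the solution $N_1(\lambda)=I$ and $N_2(\lambda)=-B^{-1}(\lambda I-A)$; these $n$ columns are linearly independent, hence form a valid basis of $\text{Ker}[\lambda I-A\ \ B]$. Consequently $\text{Im}\,N_1(\lambda_i)=\mathbb{C}^n$ for every $\lambda_i$, so $v_i\in\text{Im}\,N_1(\lambda_i)$ holds automatically. Thus no genuine obstruction arises, and the only point requiring care is the self-conjugacy and eigenvector pairing that makes $F$ real.

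Finally, Lemma~\ref{lem:dis_eig} yields a (unique, and by the pairing above real) matrix $F$ with $(A+BF)v_i=\lambda_i v_i$ for all $i$. By construction $A+BF$ has the simple eigenvalue $\lambda_1=0$ with eigenvector $v_1=f$, while its remaining eigenvalues have negative real parts. Proposition~\ref{prop:formation problem} then gives, for every $x(0)\in\mathbb{C}^n$, some $c\in\mathbb{C}$ with $\lim_{t\to\infty}x(t)=cf$, which completes the proof.
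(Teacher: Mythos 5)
Your proposal is correct and follows essentially the same route as the paper: reduce to Proposition~\ref{prop:formation problem}, verify controllability and $\mathrm{Ker}\,B=0$, and observe that $\mathrm{Im}\,N_1(\lambda_i)=\mathbb{C}^n$ so that Lemma~\ref{lem:dis_eig} applies; the only cosmetic difference is that you take $N_1(\lambda)=I$, $N_2(\lambda)=-B^{-1}(\lambda I-A)$ where the paper takes $N_1(\lambda_i)=B$, $N_2(\lambda_i)=A-\lambda_i I$, and both choices immediately give a full image. Your added remark on self-conjugate eigenvalue/eigenvector pairing to make $F$ real is consistent with the paper's discussion following the construction (i)--(iii).
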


\begin{proof}
By Proposition~\ref{prob:formation problem}, the multi-agent system (\ref{eq:linear system}) with $u=Fx$ achieves a formation configuration $f\in\mathbb{C}^n$ if the closed-loop matrix $A+BF$ has the following eigenstructure:
(i) its eigenvalues $\{\lambda_1,\lambda_2,\ldots,\lambda_n\}$ satisfy
\begin{align} \label{eq:con_eig}
&\scalebox{0.9}{$\displaystyle
0=\lambda_1<|\lambda_2|\leq \cdots \leq |\lambda_n|,\ \mbox{and}\
(\forall i \in[2,n]) \mbox{Re}(\lambda_i)<0
$}
\end{align}
(ii) the corresponding eigenvectors
\begin{align} \label{eq:con_eigvec}
\{v_1,v_2,\ldots,v_n\} \mbox{ are linearly independent, and $v_1 = f$}
\end{align}
So we must verify that the above eigenstructure is assignable by state feedback $u=Fx$ for the multi-agent system (\ref{eq:linear system}).
Note that except for $(\lambda_1,v_1)$ which is fixed, we have freedom to choose $(\lambda_i,v_i)$, $i\in[2,n]$.
For simplicity we let $\lambda_i$ be all distinct, and thus Lemma~1 can be applied.

In (\ref{eq:linear system}), we have $A =$ diag$(a_1,\ldots,a_n)$, $B =$ diag$(b_1,\ldots,b_n)$, and $b_i \ne 0$ for all $i \in [1,n]$. Thus it is easily checked that the pair $(A,B)$ is controllable and Ker$B=0$.
To show that there exists $F$ such that $(A+BF)v_i = \lambda_i v_i$, for each $i \in [1,n]$ (with $\lambda_i, v_i$ specified above), it suffices to verify the condition~(\ref{eq:lemma1}) of Lemma~1.

First, for $\lambda_1=0$, we find a basis for
\begin{align*}
\mbox{Ker}[\lambda_1 I - A \ \ B] = \mbox{Ker}[ - A \ \ B]
\end{align*}
and derive
$N_1(\lambda_1)=B$ and $N_2(\lambda_1)=A$ ($N_1(\cdot),N_2(\cdot)$ in (\ref{eq:kernel})).
Thus $\mbox{Im} N_1(\lambda_1) = \mathbb{C}^n$, and hence $v_1=f \in \mbox{Im} N_1(\lambda_1)$, i.e. the condition~(\ref{eq:lemma1}) of Lemma~1 holds.

Next, let $i \in [2,n]$; we find a basis for Ker$[\lambda_i I - A \ B]$ and derive $N_1(\lambda_i)=B$ and $N_2(\lambda_i)=A - \lambda_i I $.
So again $\mbox{Im} N_1(\lambda_i) = \mathbb{C}^n$, and $v_i \in \mbox{Im} N_1(\lambda_i)$, i.e. the condition~(\ref{eq:lemma1}) of Lemma~1 holds.
Therefore, we conclude that there always exists a state feedback $u=Fx$ such that the multi-agent system (\ref{eq:linear system}) achieves the formation configuration $f$.
\end{proof}

\smallskip

In the proof we considered distinct eigenvalues $\lambda_i$, $i\in[1,n]$;
hence the control gain matrix $F$ may be computed by (\ref{eq:dis_real}).
The computed $F$ in turn gives rise to the agents' communication graph $\mathcal {G}$.
The following is an illustrative example.

\medskip

\begin{exmp}
Consider the multi-agent system~(\ref{eq:linear system}) of $4$ single integrators (that is, $a_i=0$ and $b_i=1$, $i=1,...,4$).

(i) {\bf Square formation} with $f = [1\ \ j\ \ -1\ \ -j]^\top$ ($j = \sqrt{-1}$). Let the desired closed-loop eigenvalues be $\lambda_1=0$, $\lambda_2=-1$, $\lambda_3=-2$, $\lambda_4=-3$ and the corresponding eigenvectors be $v_1=f$, $v_2=[1\ \ 1\ \  0\ \  0]^\top$, $v_3=[0\ \ 1\ \  1\ \  0]^\top$, $v_4=[0\ \ 0\ \  1\ \  0]^\top$.
By (\ref{eq:dis_real}) one computes the control gain matrix $F_1$, which determines the corresponding communication graph $\mathcal{G}_1$ (see Fig.~\ref{fig:ex1}).

Observe that $F_1$ contains complex entries, which may be viewed as control gains for the real and imaginary axes, respectively,  or scaling and rotating gains on the complex plane. 
Also note that $\mathcal{G}_1$ has a spanning tree with node 4 the root, and the computed feedback control $u=Fx$ can be implemented by the four agents individually.

(ii) {\bf Consensus} with $f = [1 \ 1 \ 1 \ 1]^\top$. Let the desired eigenvalues be $\lambda_1=0$, $\lambda_2=-1$, $\lambda_3=-3$, $\lambda_4=-4$ and the corresponding eigenvectors be $v_1=f$, $v_2=[1\ \ 1\ \ 0\ \  1]^\top$, $v_3=[1\ \ 0\ \ 0\ \  1]^\top$, $v_4=[0\ \ 0\ \ 1\ \ -1]^\top$. Again by (\ref{eq:dis_real}) one computes the control gain matrix $F_2$ and the corresponding graph $\mathcal{G}_2$ (see Fig.~\ref{fig:ex1}). 

Note that in this case $F_2$ is real and $\mathcal{G}_2$ strongly connected. But unlike the usual consensus algorithm (e.g. \cite{SaFaMu:07}), $-F_2$ is not a {\it graph Laplacian matrix} for the entries  $(2,1)$ and $(3,1)$ are positive. Thus our eigenstructure assignment based approach may generate a larger class of consensus algorithms with negative weights.
\end{exmp}
%
%

\medskip

\begin{figure}[t]
  	\centering
  	\includegraphics[width=0.45\textwidth]{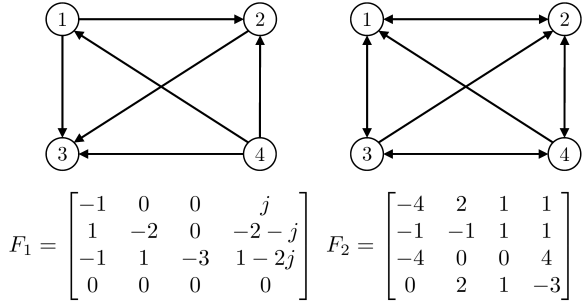}
	\caption{Example~1\label{fig:ex1}}
\end{figure}


We remark that in our approach, the \emph{convergence speed} to the desired formation configuration is assignable.  This is because the convergence speed is dominated by the eigenvalue $\lambda_2$, with the second largest real part, of the closed-loop system $\dot{x}=(A+BF)x$; and in our approach $\lambda_2$ is freely assignable. The smaller the Re$(\lambda_2)$ is, the faster the convergence to formation occurs (at the cost of higher control gain). As an example, for (ii) in Example~1 assign the second largest eigenvalue $\lambda_2=-2$ (originally $-1$), and change $v_4=[0\ 0\ \frac{1}{2}\ -1]$. This results in a new feedback matrix
\begin{align*}
F_2'=
\begin{bmatrix}
   -4& 1& 2& 1\\
   -1&-2& 2& 1\\
   -2& 0& 0& 2\\
    0& 1& 2&-3
\end{bmatrix}
\end{align*}
which has zero entries at the same locations as $F_2$. 
Thus with the same topology, $F'_2$ achieves faster convergence speed.

As we have seen in Example~1, the feedback matrix $F$'s off-diagonal entries, which determine the topology of $\mathcal {G}$, 
are dependent on the choice of eigenvalues as well as eigenvectors.
Namely different sets of eigenvalues and eigenvectors result in different inter-agent communication topologies.
Our next result characterizes a precise relation between the eigenvalues/eigenvectors and the topologies.

\begin{thm} \label{thm:topo}
Consider the multi-agent system~(\ref{eq:linear system}) and $f$ a desired formation configuration.
Let the eigenstructure $\lambda_i$ and $v_i$ ($i=1,...,n$) be as in (\ref{eq:con_eig}) and (\ref{eq:con_eigvec}), and denote the rows of $[v_1 \cdots v_n]^{-1}$ by $v_i^*$. Then the communication graph $\mathcal{G} = (\mathcal{V}, \mathcal{E})$ of the (closed-loop) multi-agent system is such that
\begin{align*}
(i_1,j_1), \ldots, (i_K,j_K) \notin \mathcal{E} \ \ \ (K \geq 1)
\end{align*}
if and only if the vector
\begin{align*}
[\lambda_2 \ \cdots \ \lambda_n]^\top
\end{align*}
is orthogonal to the subspace spanned by the $K$ vectors:
\begin{align*}
[v_{2 \, i_1} v^*_{2 \, j_1} \ \cdots \ v_{n \, i_1} v^*_{n \, j_1}]^\top, \cdots, [v_{2 \, i_K} v^*_{2 \, j_K} \ \cdots \ v_{n \, i_K} v^*_{n \, j_K}]^\top.
\end{align*}
\end{thm}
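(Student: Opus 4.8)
The plan is to reduce the entire statement to a single matrix identity for the closed-loop matrix and then read off its off-diagonal entries. The key observation is that, by Theorem~\ref{thm:form} together with the uniqueness clause of Lemma~\ref{lem:dis_eig}, the feedback $F$ that assigns the eigenstructure $(\lambda_i, v_i)$ of (\ref{eq:con_eig})--(\ref{eq:con_eigvec}) satisfies $(A+BF)v_i = \lambda_i v_i$ for every $i$. Collecting these column relations with $V := [v_1\cdots v_n]$ (invertible by (\ref{eq:con_eigvec})) and $\Lambda := \mathrm{diag}(\lambda_1,\ldots,\lambda_n)$ yields the compact identity $A+BF = V\Lambda V^{-1}$. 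This is the workhorse of the proof; no explicit recomputation of the gains $w_i$ from the construction procedure is needed.

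Next I would isolate the off-diagonal structure of $F$. Rewriting the identity as $BF = V\Lambda V^{-1} - A$ and using that $A$ and $B$ are diagonal with $b_i \neq 0$, I get, for $i \neq j$, $F_{ij} = b_i^{-1}(V\Lambda V^{-1})_{ij}$; hence $F_{ij} = 0$ iff $(V\Lambda V^{-1})_{ij} = 0$. Expanding the product with $v_i^*$ denoting the rows of $V^{-1}$ gives $(V\Lambda V^{-1})_{ij} = \sum_{k=1}^n \lambda_k\, v_{k\,i}\, v^*_{k\,j}$, and since $\lambda_1 = 0$ by (\ref{eq:con_eig}) the $k=1$ term drops out, leaving a sum from $k=2$ to $n$.

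I would then recognize this surviving sum as the (bilinear) pairing of $[\lambda_2\cdots\lambda_n]^\top$ with the vector $u_{ij} := [v_{2\,i} v^*_{2\,j}\ \cdots\ v_{n\,i}v^*_{n\,j}]^\top$. Thus a single off-diagonal entry vanishes exactly when $[\lambda_2\cdots\lambda_n]^\top$ is orthogonal to $u_{ij}$. Applying this to each of the $K$ prescribed index pairs $(i_\ell, j_\ell)$, the requirement that all $K$ corresponding entries vanish is the conjunction of $K$ orthogonality conditions, which holds iff $[\lambda_2\cdots\lambda_n]^\top$ is orthogonal to every $u_{i_\ell j_\ell}$, i.e. to their span. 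This is precisely the claimed equivalence.

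I expect the only delicate point to be bookkeeping: matching the edge convention (edge $(j,i)\in\mathcal E \Leftrightarrow F_{ij}\neq0$) to the row/column indices that appear in the vectors $u_{i_\ell j_\ell}$, and being explicit that ``orthogonal'' here means the bilinear dot product $\sum_k \lambda_k (u_{ij})_k = 0$ rather than the Hermitian inner product (there is no conjugation on the $\lambda_k$). Everything else is a direct consequence of the identity $A+BF = V\Lambda V^{-1}$.
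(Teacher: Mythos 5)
Your proof is correct and follows essentially the same route as the paper: both hinge on the identity $A+BF = V\,\mathrm{diag}(\lambda_1,\ldots,\lambda_n)\,V^{-1}$, followed by expanding the $(i,j)$-entry, dropping the $\lambda_1=0$ term, and reading off the orthogonality condition. The only (harmless) difference is that you obtain the identity directly from the relations $(A+BF)v_i=\lambda_i v_i$, whereas the paper re-derives it by explicitly computing $F$ through the kernel-basis construction of (\ref{eq:dis_real}); your remark that the pairing is bilinear rather than Hermitian is a correct reading of the paper's usage.
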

\medskip

\begin{proof}
For each $\lambda_i$, $i \in [1,n]$, we derive from (\ref{eq:kernel_eq}) that
\begin{align*}
(\lambda_i I - A)N_1(\lambda_i) + BN_2(\lambda_i) = 0
\end{align*}
Choose $N_1(\lambda_i) = B$ and $N_2(\lambda_i)=-(\lambda_i I-A)$ to satisfy the above equation.
Then $k_i=N_1^{-1}(\lambda_i)v_i=B^{-1}v_i$ and $w_i=-N_2(\lambda_i)k_i=(\lambda_i I - A)B^{-1}v_i$. By (\ref{eq:dis_real}) we have
\begin{align*}
F&=[w_1\cdots w_n][v_1\cdots v_n]^{-1} \\
&=[(\lambda_1 I - A)B^{-1}v_1 \cdots (\lambda_n I - A)B^{-1}v_n][v_1\cdots v_n]^{-1} \\
&=B^{-1}(-A[v_1\cdots v_n] + [\lambda_1 v_1 \cdots \lambda_n v_n])[v_1\cdots v_n]^{-1}\\
&=-B^{-1}A + B^{-1}[v_1\cdots v_n]\mbox{diag}(\lambda_1,\ldots,\lambda_n)[v^*_1\cdots v^*_n]^{\top}.
\end{align*}
Thus the closed-loop matrix
\begin{align} \label{eq:A+BF}
A+BF=[v_1\cdots v_n]\mbox{diag}(\lambda_1,\ldots,\lambda_n)[v^*_1\cdots v^*_n]^{\top}.
\end{align}
The $(i,j)$-entry of $A+BF$ is 
\begin{align*}
(A+BF)_{ij} &= \lambda_1 v_{1i} v^*_{1j} + \lambda_2 v_{2i} v^*_{2j} + \cdots + \lambda_n v_{ni} v^*_{nj}  \\
&= \lambda_2 v_{2i} v^*_{2j} + \cdots + \lambda_n v_{ni} v^*_{nj} \ \ \ (\lambda_1=0)\\
&= [\lambda_2 \ \cdots \ \lambda_n] [v_{2i} v^*_{2j} \ \cdots \ v_{ni} v^*_{nj}]^\top.
\end{align*}
Therefore $(A+BF)_{i_1 j_1} = \cdots = (A+BF)_{i_K j_K} = 0$, i.e. in the communication graph $(i_1,j_1), \ldots, (i_K,j_K) \notin \mathcal{E}$, if and only if the vector
$[\lambda_2 \ \cdots \ \lambda_n]^\top$
is orthogonal to each of the following $K$ vectors:
\begin{align*}
[v_{2 \, i_1} v^*_{2 \, j_1} \ \cdots \ v_{n \, i_1} v^*_{n \, j_1}]^\top, \cdots, [v_{2 \, i_K} v^*_{2 \, j_K} \ \cdots \ v_{n \, i_K} v^*_{n \, j_K}]^\top.
\end{align*}
Namely $[\lambda_2 \ \cdots \ \lambda_n]^\top$ is orthogonal to the subspace spanned by these $K$ vectors.
\end{proof}

\smallskip

Once the desired eigenvalues and eigenvectors are chosen, 
Theorem~\ref{thm:topo} provides a necessary and sufficient condition to check the interconnection topology among the agents, without actually computing the feedback matrix $F$. On the other hand, the problem of choosing an appropriate eigenstructure to match a given topology is more difficult, inasmuch as there are many free variables to be determined in the eigenvalues and eigenvectors. While we shall investigate the general problem of eigenstructure design for imposing particular topologies in our future work, in the next subsection, nevertheless,  we show that choosing certain appropriate eigenstructures results in certain special (sparse) topologies. With these topologies the synthesized control $u=Fx$ may be implemented in a distributed fashion.



\subsection{Special Topologies}
We show how to derive the following three types of special topologies by choosing appropriate eigenstructures.
\subsubsection{Star Topology}
A directed graph $\mathcal{G}=(\mathcal{V},\mathcal {E})$ is a star topology if there is a single root node, say node 1, and $\mathcal {E} = \{(1,i) | i \in [2,n]\}$. Thus all the other nodes receive information from, and only from, the root node 1. In terms of the total number of edges, a star topology is one of the sparsest topologies, with the least number ($n-1$) of edges, that contain a spanning tree. Now consider the following eigenstructure.
\begin{align}
&\mbox{eigenvalues: } \lambda_1=0, \lambda_2,\ldots,\lambda_n \mbox{ distinct } \notag\\
& \hspace{1.9cm} \mbox{and }  \mbox{Re}(\lambda_2),\ldots,\mbox{Re}(\lambda_n)<0 \notag\\
&\mbox{eigenvectors: } 
\underset{
\begin{array}{@{}c@{}}
\mbox{(independent)}
\end{array}
}
{[v_1 \ v_2 \cdots v_n]}=
\left[
\begin{array}{cccc}
f_1	&0		&\cdots	&0 \\
f_2	&1		&\cdots	&0 \\
\vdots	&\vdots	&\ddots	&\vdots \\
f_n	&0	&\cdots	&1 \\
\end{array}
\right]\label{eq:star eig}
\end{align}

\begin{prop}\label{prop:star topology}
Consider the multi-agent system (\ref{eq:linear system}).
If the eigenstructure~(\ref{eq:star eig}) is used in the synthesis of feedback control $u=Fx$, then Problem~\ref{prob:formation problem} is solved and the resulting graph $\mathcal{G}$ is a star topology.
\end{prop}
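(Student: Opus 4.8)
The plan is to establish both claims from the explicit closed-loop formula $A+BF=[v_1\cdots v_n]\,\mathrm{diag}(\lambda_1,\ldots,\lambda_n)\,[v_1^*\cdots v_n^*]^\top$ derived in the proof of Theorem~\ref{thm:topo}, where the $v_i^*$ are the (transposed) rows of $V^{-1}$ with $V:=[v_1\cdots v_n]$. For the solvability of Problem~\ref{prob:formation problem}, I would simply check that the eigenstructure~(\ref{eq:star eig}) satisfies the hypotheses of Theorem~\ref{thm:form}: the eigenvalues meet $\lambda_1=0$ and $\mathrm{Re}(\lambda_i)<0$ for $i\in[2,n]$ as in~(\ref{eq:con_eig}), and the columns of $V$ are linearly independent with $v_1=f$, which holds precisely because $f_1\neq0$ renders $V$ invertible. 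Theorem~\ref{thm:form} then gives $\lim_{t\to\infty}x(t)=cf$.

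For the topology, the key step is to exploit the special shape of~(\ref{eq:star eig}): since $v_k=e_k$ (the $k$-th standard basis vector) for $k\in[2,n]$, we have $v_{ki}=\delta_{ki}$. I would first invert $V$ by solving $Vx=y$ row by row, obtaining $v_1^*=[1/f_1\ 0\ \cdots\ 0]^\top$ and, for $i\in[2,n]$, a row of $V^{-1}$ equal to $-(f_i/f_1)e_1^\top+e_i^\top$; that is, entry $(i,j)$ of $V^{-1}$ is $-f_i/f_1$ for $j=1$, is $1$ for $j=i$, and is $0$ otherwise. Substituting into the entrywise formula $(A+BF)_{ij}=\sum_{k=2}^n\lambda_k v_{ki}v_{kj}^*$ from the proof of Theorem~\ref{thm:topo}, the factor $v_{ki}=\delta_{ki}$ collapses the sum to a single term: row $1$ vanishes identically, while for $i\in[2,n]$ one gets $(A+BF)_{ij}=\lambda_i v_{ij}^*$, equal to $-\lambda_i f_i/f_1$ when $j=1$, to $\lambda_i$ when $j=i$, and to $0$ otherwise.

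Hence the only nonzero off-diagonal entries of $A+BF$ (equivalently of $F$) sit in the first column, at positions $(i,1)$ for $i\in[2,n]$. Under the edge convention $F_{ij}\neq0\Leftrightarrow(j,i)\in\mathcal{E}$, this yields exactly $\mathcal{E}=\{(1,i)\mid i\in[2,n]\}$, the star topology rooted at node~$1$, which completes the proof.

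I expect the main (though still routine) obstacle to be the explicit inversion of $V$ and the careful bookkeeping of the index pattern of the rows $v_i^*$; once this is in hand, the collapse of the summation via $v_{ki}=\delta_{ki}$ is immediate. One caveat worth flagging is the degenerate case $f_i=0$ for some $i\in[2,n]$: then $(A+BF)_{i1}=0$ and the edge $(1,i)$ is absent, so the graph is strictly a sub-star; assuming a non-degenerate configuration with all $f_i\neq0$ (implicit in the statement, along with $f_1\neq0$) recovers the full star topology.
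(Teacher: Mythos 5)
Your proof is correct and follows essentially the same route as the paper: both verify solvability via Theorem~\ref{thm:form} and Proposition~\ref{prop:formation problem}, then substitute the explicit inverse of $V$ (identical to yours) into the closed-loop formula $A+BF=V\,\mathrm{diag}(\lambda_1,\ldots,\lambda_n)V^{-1}$ from the proof of Theorem~\ref{thm:topo} to read off the star structure. Your caveat about degenerate configurations with $f_i=0$ for some $i\in[2,n]$ (yielding a strict sub-star) is a valid observation that the paper leaves implicit.
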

\begin{proof}
First, it follows from the proof of Theorem~\ref{thm:form} that the eigenstructure~(\ref{eq:star eig}) can be assigned to the closed-loop matrix $A+BF$. Then by Proposition~\ref{prop:formation problem}, Problem~\ref{prob:formation problem} is solved. 

Now proceed analogously to the proof of Theorem~\ref{thm:topo} and derive $A+BF$ as in (\ref{eq:A+BF}). 
Substituting into (\ref{eq:A+BF}) the eigenvalues and eigenvectors in (\ref{eq:star eig}), as well as 
\begin{align*}
[v_1 \ v_2 \cdots v_n]^{-1}=
\begin{bmatrix}
\frac{1}{f_1}		&0	&\cdots	&0\\
-\frac{f_2}{f_1}	        &1	&\cdots	&0\\
\vdots			&\vdots	&\ddots	&\vdots \\
-\frac{f_n}{f_1}	        &0		&\cdots	&1\\
\end{bmatrix}
\end{align*}
we derive 
\begin{align*}
A+BF =
\left[
\begin{array}{cccc}
0				&0		&\cdots	&0 \\
-\frac{\lambda_2 f_2}{f_1}	&\lambda_2	& \cdots	&0 \\
\vdots				& \vdots		&\ddots	&\vdots \\
-\frac{\lambda_n f_n}{f_1}	&0		&\cdots	&\lambda_n \\
\end{array}
\right].
\end{align*}
Therefore the corresponding graph $\mathcal{G}$ is a star topology with node 1 the root.
\end{proof}
\smallskip

\subsubsection{Cyclic Topology}
A directed graph $\mathcal{G}=(\mathcal{V},\mathcal {E})$ is a cyclic topology if $\mathcal {E} = \{(1,2), (2,3), ..., (n-1,n),(n,1)\}$. Consider the following eigenstructure.
\begin{align}
\scalebox{0.9}{$\displaystyle $}
&\mbox{eigenvalues: } \{\lambda_1,\lambda_2, \ldots, \lambda_n\} = \{0, \omega -1, \ldots, \omega^{n-1}-1 \} \notag\\
&\mbox{eigenvectors: } 
\underset{
\begin{array}{@{}c@{}}
\mbox{(independent)}
\end{array}
}
{[v_1 \ v_2 \cdots v_n]} = 
\scalebox{0.7}{$\displaystyle
\left[
\begin{array}{cccc}
f_1	&f_1		& \cdots	&f_1 \\
f_2	&f_2\omega	& \cdots	&f_2\omega^{n-1} \\
f_3	&f_3\omega^2		&\cdots	&f_3\omega^{2(n-1)} \\
\vdots	&\vdots		&\vdots	&\vdots \\
f_n	&f_n\omega^{n-1}	&\cdots	&f_n\omega^{(n-1)(n-1)}
\end{array}
\right] \label{eq:cyclic eig}
$}
\end{align}
where $\omega := e^{2\pi j / n}$ ($j=\sqrt{-1}$).
\medskip
\begin{prop}\label{prop:cyclic topology}
Consider the multi-agent system (\ref{eq:linear system}).
If the eigenstructure~(\ref{eq:cyclic eig}) is used in the synthesis of feedback control $u=Fx$, then Problem~1 is solved and the resulting $\mathcal{G}$ is a cyclic topology.
\end{prop}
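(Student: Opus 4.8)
The plan is to mirror the structure of the proof of Proposition~\ref{prop:star topology}. There are two assertions to establish: that Problem~\ref{prob:formation problem} is solved, and that the induced graph $\mathcal{G}$ is cyclic. For the first, I would check that the eigenstructure~(\ref{eq:cyclic eig}) meets the hypotheses of Theorem~\ref{thm:form} (equivalently conditions (\ref{eq:con_eig}) and (\ref{eq:con_eigvec})), so that it is assignable by some $u=Fx$, and then invoke Proposition~\ref{prop:formation problem}. For the second, I would compute $A+BF$ in closed form via (\ref{eq:A+BF}) and read off its off-diagonal support.

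Concretely, since $\lambda_k = \omega^{k-1}-1$ with $\omega = e^{2\pi j/n}$, the value $\lambda_1 = \omega^0 - 1 = 0$ is the unique zero eigenvalue, and for $k \in [2,n]$ the point $\omega^{k-1}$ lies on the unit circle but differs from $1$, so $\mathrm{Re}(\lambda_k) = \mathrm{Re}(\omega^{k-1}) - 1 < 0$; moreover the $\lambda_k$ are distinct because the $\omega^{k-1}$ are the $n$ distinct $n$-th roots of unity. The eigenvector matrix in (\ref{eq:cyclic eig}) factors as $V = D\,W$, where $D := \mathrm{diag}(f_1,\ldots,f_n)$ (the construction implicitly requires each $f_i \neq 0$ so that $V$ is invertible) and $W$ is the Vandermonde/DFT matrix with $W_{ik} = \omega^{(i-1)(k-1)}$; invertibility of $W$ then gives linear independence of the eigenvectors. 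Hence Theorem~\ref{thm:form} applies and Problem~\ref{prob:formation problem} is solved.

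For the topology I would use (\ref{eq:A+BF}), i.e. $A+BF = V\,\mathrm{diag}(\lambda_1,\ldots,\lambda_n)\,V^{-1}$. Writing $\Omega := \mathrm{diag}(1,\omega,\ldots,\omega^{n-1})$, the eigenvalue matrix is exactly $\Omega - I$, so $A+BF = V\Omega V^{-1} - I = D\,(W\Omega W^{-1})\,D^{-1} - I$. The heart of the argument is therefore the single computation $W\Omega W^{-1}$. Using the standard DFT inverse $(W^{-1})_{kj} = \tfrac{1}{n}\,\omega^{-(k-1)(j-1)}$, the $(i,j)$ entry reduces to $\tfrac{1}{n}\sum_{m=0}^{n-1}\omega^{m(i-j+1)}$; by the geometric-sum identity for roots of unity this equals $1$ when $j \equiv i+1 \pmod n$ and $0$ otherwise. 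Thus $W\Omega W^{-1}$ is precisely a circulant shift matrix whose only nonzero entries occupy the cyclic superdiagonal positions $(1,2),(2,3),\ldots,(n-1,n),(n,1)$.

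Finally, conjugation by the diagonal $D$ rescales the $(i,j)$ entry by $f_i/f_j$ and so leaves the zero/nonzero pattern intact, while subtracting $I$ alters only the diagonal. The off-diagonal support of $A+BF$ is therefore a single directed cycle through all $n$ nodes, which, read through the edge convention of Section~II, is the cyclic topology (the two orientations of the cycle correspond to the choice $\omega=e^{\pm 2\pi j/n}$, both realizing the same self-conjugate eigenvalue set). I expect the main obstacle to be the evaluation of $W\Omega W^{-1}$: recognizing the scaled Vandermonde as a DFT matrix and carrying out the roots-of-unity sum is where the real work lies, whereas the remaining steps are the bookkeeping already rehearsed in the star-topology proof.
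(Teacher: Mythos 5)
Your proposal is correct and follows essentially the same route as the paper: both verify assignability via Theorem~\ref{thm:form} and then substitute the explicit inverse of the eigenvector matrix (the conjugate DFT, scaled by $1/n$ and $1/f_j$) into (\ref{eq:A+BF}) to obtain a closed-loop matrix whose off-diagonal support is a single directed $n$-cycle. The only difference is organizational --- you factor $V=DW$ and evaluate $W\Omega W^{-1}$ by the roots-of-unity geometric sum, whereas the paper simply writes down $V^{-1}$ and the resulting $A+BF$ --- and your explicit remarks that $f_i\neq 0$ is needed for invertibility and that $\mathrm{Re}(\omega^{k-1})<1$ gives the stability of $\lambda_2,\ldots,\lambda_n$ are details the paper leaves implicit.
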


\begin{proof}
Following the same lines as the proof of Proposition~\ref{prop:star topology}, and using the eigenstructure in (\ref{eq:cyclic eig}) and
\begin{align*}
[v_1 \ v_2 \cdots v_n]^{-1}=\frac{1}{n}
\begin{bmatrix}
\frac{1}{f_1}	&\frac{1}{f_2}				        &\cdots	&\frac{1}{f_n}			\\
\frac{1}{f_1}	&\frac{\bar{\omega}}{f_2}			&\cdots	&\frac{\bar{\omega}^{n-1}}{f_n}		\\
\frac{1}{f_1}	&\frac{\bar{\omega}^2}{f_2}				&\cdots	&\frac{\bar{\omega}^{2(n-1)}}{f_n}	\\
\vdots		&\vdots				                 &					&\vdots				\\
\frac{1}{f_1}	&\frac{\bar{\omega}^{(n-1)}}{f_2}        &\cdots	&\frac{\bar{\omega}^{(n-1)(n-1)}}{f_n}\\
\end{bmatrix}
\end{align*}
we derive
\begin{align*}
A+BF =
\left[
\begin{array}{ccccc}
-1	&\frac{f_1}{f_2}	&0			& \cdots			&0	\\
0			&-1	&\frac{f_2}{f_3}	&\cdots			&0	\\
\vdots			&\vdots			&\ddots			&\ddots	&\vdots 	\\
0			&0			&0			&\ddots	&\frac{f_{n-1}}{f_n}	\\
\frac{f_n}{f_1}		&0			&0			&\cdots	&-1	\\
\end{array}
\right].
\end{align*}
By inspection, we conclude that the corresponding graph $\mathcal{G}$ is a cyclic topology.
\end{proof}
\medskip
\subsubsection{Line Topology}
A directed graph $\mathcal{G}=(\mathcal{V},\mathcal {E})$ is a (directed) line topology if there is a single root node, say node 1, and $\mathcal {E} = \{(1,2), (2,3), ..., (n-1,n)\}$. A line topology is also one of the sparsest topologies containing a spanning tree. Now consider the following eigenstructure.
\begin{align}\label{eq:line eig}
&\mbox{eigenvalues: }  \lambda_1=0, \lambda_2 = \cdots = \lambda_n = -1 \notag\\
&\mbox{eigenvectors: }  
\underset{
\begin{array}{@{}c@{}}
\mbox{(independent)}
\end{array}
}
{[v_1 \ v_2 \cdots v_n]} =
\scalebox{0.9}{$\displaystyle
\left[
\begin{array}{cccc}
f_1	&0			&\cdots	&0 \\
f_2	&0			&\cdots	        &-f_2 \\
\vdots	 &\vdots	&	&\vdots \\
f_{n-1}	&0		&\reflectbox{$\ddots$}	        &-f_{n-1} \\
f_n	&-f_n			&\cdots	&-f_n
\end{array}
\right]
$}
\end{align}

\begin{prop}\label{prop:line topology}
Consider the multi-agent system (\ref{eq:linear system}).
If the eigenstructure~(\ref{eq:line eig}) is used in the synthesis of feedback control $u=Fx$, then Problem~1 is solved and the resulting  $\mathcal{G}$ is a line topology.
\end{prop}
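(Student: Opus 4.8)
The plan is to mirror the proofs of Propositions~\ref{prop:star topology} and~\ref{prop:cyclic topology}, but with one essential modification: here the eigenvalue $-1$ is \emph{repeated} with multiplicity $n-1$, so Lemma~\ref{lem:dis_eig} (which requires distinct eigenvalues) and the diagonalization identity~(\ref{eq:A+BF}) do \emph{not} apply directly. The columns $v_2,\ldots,v_n$ of~(\ref{eq:line eig}) must instead be read as a single \emph{Jordan chain} of generalized eigenvectors for $-1$: the bottom vector $v_2=[\,0\ \cdots\ 0\ {-f_n}\,]^\top$ is a genuine eigenvector, while $v_3,\ldots,v_n$ satisfy the chain relations $(A+BF+I)v_k=v_{k-1}$. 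Accordingly the correct identity is $A+BF=VJV^{-1}$, where $V=[v_1\cdots v_n]$ and $J$ is the Jordan matrix with $(1,1)$-entry $0$, diagonal entries $-1$ in positions $2,\ldots,n$, and $1$'s on the superdiagonal of the $-1$ block. Assignability of this eigenstructure follows from the repeated-eigenvalue extension of Moore's method (\cite{Moore:77} or Appendix); equivalently, since $B$ is invertible, the target matrix $M:=VJV^{-1}$ is realized outright by $F=B^{-1}(M-A)$.

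First I would dispose of the ``Problem~1 solved'' claim. The matrix $A+BF=VJV^{-1}$ is similar to $J$, so its eigenvalue $0$ has algebraic multiplicity one (the sole zero entry of $J$) with eigenvector $v_1=f$, and every remaining eigenvalue equals $-1$, which has negative real part. Even though $-1$ carries a nontrivial Jordan block, each associated mode decays like $t^{k}e^{-t}\to 0$, so Proposition~\ref{prop:formation problem} applies verbatim and yields $\lim_{t\to\infty}x(t)=cf$.

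Next I would establish the line topology by direct computation of $M=VJV^{-1}$. The step is to invert the (lower anti-triangular) eigenvector matrix in~(\ref{eq:line eig}); solving $Vx=e_k$ column by column exploits the staircase pattern and returns a sparse $V^{-1}$. A convenient intermediate is $VJ$: using the chain relations, its $k$-th column is $-v_2$ for $k=2$ and $v_{k-1}-v_k$ for $k\ge 3$, and these differences telescope to single nonzero entries, so $VJ$ is \emph{anti-diagonal} with values $f_n,\ldots,f_2$ in the lower-right block. Multiplying this anti-diagonal $VJ$ by the sparse $V^{-1}$ then collapses to a lower \emph{bidiagonal} $M$ whose subdiagonal entries are $f_i/f_{i-1}$ $(i=2,\ldots,n)$ and whose remaining off-diagonal entries vanish. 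Since each $f_i\ne 0$, the subdiagonal entries are nonzero, so the edge set is exactly $\{(1,2),\ldots,(n-1,n)\}$, i.e.\ a line topology.

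The main obstacle is conceptual rather than computational: one must recognize that $v_2,\ldots,v_n$ are \emph{generalized} eigenvectors and that the closed loop is governed by the Jordan form $J$ rather than a diagonal $\Lambda$. This is precisely what a line topology (non-normal, hence non-diagonalizable at $-1$) forces, and it is easy to misread~(\ref{eq:line eig}) as an ordinary eigenbasis---doing so would incorrectly yield a \emph{star} topology instead. I would therefore verify that the $-1$ eigenspace is one-dimensional (spanned by $v_2$), so the chain is a single Jordan block, and that the relations $(A+BF+I)v_k=v_{k-1}$ are consistent with the prescribed columns; the ensuing inversion of $V$ and confirmation of the bidiagonal pattern are then routine bookkeeping.
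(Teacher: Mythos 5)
Your proof is correct and reaches exactly the closed-loop matrix the paper ends up with, but by a genuinely different route. The paper invokes the repeated-eigenvalue machinery of \cite{Moore:77} (Lemma~\ref{lem:rep_eig} in the Appendix): it computes the matrices $N_{1i},N_{2i},S_{1i},S_{2i}$, builds the vector chains $p_{ij}$ and $w_{ij}$, verifies condition~(\ref{eq:lemma2}), and only then forms $F=WV^{-1}$ and reads off the bidiagonal $A+BF$. You instead exploit the fact that for system~(\ref{eq:linear system}) the matrix $B$ is square and invertible, so \emph{any} target closed-loop matrix $M$ is realizable outright by $F=B^{-1}(M-A)$; you then set $M:=VJV^{-1}$ with $J$ the Jordan matrix (simple eigenvalue $0$ plus a single Jordan block at $-1$) and compute $M$ directly, correctly finding $VJ$ anti-diagonal and $M$ lower bidiagonal with subdiagonal entries $f_i/f_{i-1}$ --- precisely the matrix in the paper's Appendix. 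Your identification of $v_2$ as the genuine eigenvector of $-1$ and of $v_3,\dots,v_n$ as a chain satisfying $(A+BF+I)v_k=v_{k-1}$ is consistent with that final matrix, and your remark that Proposition~\ref{prop:formation problem} tolerates the nontrivial Jordan block (modes of the form $t^{k}e^{-t}\to0$) properly closes the convergence claim; your warning that misreading~(\ref{eq:line eig}) as an ordinary eigenbasis would produce a star rather than a line topology is also accurate. What your route buys is economy: Lemma~\ref{lem:rep_eig} and all the $S_i$, $p_{ij}$ bookkeeping are bypassed. What it costs is generality: the shortcut $F=B^{-1}(M-A)$ is unavailable when $B$ is not invertible (the situation of Section~IV.B), which is what the paper's heavier procedure is built to accommodate. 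One implicit hypothesis in both arguments is that every component $f_i$ is nonzero, but this is already forced by the independence requirement in~(\ref{eq:line eig}), so there is no gap.
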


Note that in (\ref{eq:line eig}) we have repeated eigenvalues ($\lambda_2$) and the corresponding 
generalized eigenvectors. As a result, Lemma~\ref{lem:dis_eig} and (\ref{eq:dis_real}) for computing the control gain matrix $F$ cannot be applied to this case. Instead, we resort to the generalized method of \cite{Moore:77}, and provide a proof of Proposition~\ref{prop:line topology} in Appendix.

\subsection{Topology Constrained Eigenstructure Assignment}

We end this section by presenting an alternative approach to imposing topological constraints on eigenstructure assignment. 

Suppose that we have computed by (\ref{eq:dis_real}) a feedback matrix $F$ to achieve a desired formation, i.e. the closed-loop matrix ($A + BF$)'s
eigenvalues $\lambda_1,...,\lambda_n$ and eigenvectors $v_1,...,v_n$ satisfy (\ref{eq:con_eig}) and (\ref{eq:con_eigvec}). 
Now assume that a topological constraint is imposed such that agent $i$ cannot receive information from agent $j$ (for reasons such as cost or physical impossibility). But unfortunately, in the computed $F$ the $(i,j)$-entry $F_{i j} \neq 0$.
Thus our goal is to derive a new feedback matrix $\hat{F}$, by suitably modifying $F$, such that $\hat{F}_{i j}=0$. In doing so, the new closed-loop matrix $A + B\hat{F}$ will generally have different eigenvalues $\hat{\lambda}_1,...,\hat{\lambda}_n$ and eigenvectors $\hat{v}_1,...,\hat{v}_n$.
Hence we must check if these new eigenvalues and eigenvectors still satisfy (\ref{eq:con_eig}) and (\ref{eq:con_eigvec}).

Our approach proceeds as follows, which is inspired by the ``constrained feedback'' method in \cite{AndShaChu:83}.
Writing $V:=[v_1 \ \cdots \ v_n]$ and $J:=\mbox{diag}(\lambda_1,...,\lambda_n)$, we have from (\ref{eq:A+BF}):
\begin{align*}
&(A+BF)V = VJ \\
\Rightarrow & BFV = VJ - AV \\
\Rightarrow & FV = B^{-1} (VJ-AV).
\end{align*}
Let $\psi:=B^{-1}(VJ-AV)$ and denote by $\psi_i$ the $i$th row of $\psi$, also $F_i$ the $i$th row of $F$. 
Then the above equation is rewritten in terms of Kronecker product and row stacking as follows:
\begin{align*}
\begin{bmatrix}
V^\top&&0\\
&\ddots&\\
0&&V^\top
\end{bmatrix}
\begin{bmatrix}
F_1^\top\\
\vdots\\
F_n^\top
\end{bmatrix}=
\begin{bmatrix}
\psi_1^\top\\
\vdots\\
\psi_n^\top
\end{bmatrix}.
\end{align*}
To constrain the $(i,j)$-entry of $F$ to be zero, we focus on 
the equation $V^\top F^\top_i = \psi_i^\top$. 
Deleting $F_{ij}$ from $F^\top_i$ as well as the $j$th column of $V^\top$, 
we obtain the reduced equation:
\begin{align}\label{eq:contraint F equation}
\hat{V}^\top \hat{F}_i^\top= \psi_i^\top
\end{align}
where $\hat{V}^\top\in\mathbb{C}^{n\times (n-1)}$ is the matrix $V^\top$ with $j$th column deleted and $\hat{F}_i^\top\in\mathbb{C}^{n-1}$ the vector $F_i^\top$ with $j$th element deleted.
Now view the entries of $\hat{F}_i^\top$ as the unknowns (i.e. (\ref{eq:contraint F equation}) contains $n$ equations with $n-1$ unknowns $\hat{F}_{i1},...,\hat{F}_{i(j-1)} \ \hat{F}_{i(j+1)},...,\hat{F}_{in}$). Using the pseudoinverse of $\hat{V}^\top$, denoted by $(\hat{V}^\top)^{\dagger}$, we derive  
\begin{align}\label{eq:Fi_hat}
\hat{F}_i^\top = (\hat{V}^\top)^{\dagger} \psi_i^\top.
\end{align}
By (\ref{eq:Fi_hat}) we set the new feedback matrix  
\begin{align}\label{eq:F_hat}
\hat{F} :=
\begin{bmatrix}
&&&F_1&&&\\
&&&\vdots&&&\\
\hat{F}_{i1}&\cdots&\hat{F}_{i(j-1)}&0&\hat{F}_{i(j+1)}&\cdots&\hat{F}_{in}\\
&&&\vdots&&&\\
&&&F_n&&&
\end{bmatrix}
\end{align}
namely $\hat{F}$ is the same as the originally computed $F$ except for the $i$th row replaced by $\hat{F}_i^\top$ computed in (\ref{eq:Fi_hat}) and $\hat{F}_{ij}=0$.\footnote{The above derivation may be readily extended to deal with more than one topological constraint.}

While the new feedback matrix $\hat{F}$ satisfies the imposed topological constraint, 
the eigenstructure of $A+B\hat{F}$ is generally different from that of $A+BF$. Therefore, we must verify if the new eigenvalues/eigenvectors still satisfy (\ref{eq:con_eig}) and (\ref{eq:con_eigvec}), i.e. achieve formation control. This verification need not always be successful, but in case it does turn out successful, we are guaranteed to achieve the desired formation with a feedback matrix satisfying the imposed topological constraint. We illustrate the above method by the following example.

\begin{exmp}
Consider the multi-agent system~(\ref{eq:linear system}) of $5$ single integrators (that is, $a_i=0$ and $b_i=1$, $i=1,...,5$), and the desired formation is simply consensus ($f=[1 \ 1 \ 1 \ 1 \ 1]^\top$). Choose the following eigenvalues and eigenvectors 
\begin{align*}
\scalebox{0.9}{$\displaystyle $}
&\mbox{eigenvalues: } \{\lambda_1,\lambda_2,\lambda_3,\lambda_4,\lambda_5\} = \{0, -1, -2, -3, -4 \}\\
&\mbox{eigenvectors: } [v_1\ v_2 \ v_3\ v_4\ v_5] = 
\scalebox{0.9}{$\displaystyle
\begin{bmatrix}
     1& 0& 0& 0& 0\\
     1& 1& 0& 0& 1\\
     1& 0& 1& 1&-1\\
     1& 0& 0& 1& 0\\
     1& 0& 1
& 0& 1
\end{bmatrix}
$}
\end{align*}
and compute by (\ref{eq:dis_real}) the feedback matrix $F$ to achieve consensus:
\begin{align*}
F=
\begin{bmatrix}
      0&   0&    0&    0&    0\\
    2.5&  -1&  1.5& -1.5& -1.5\\
      2&   0&   -3&    0&    1\\
      3&   0&    0&   -3&    0\\
      3&   0&    1&   -1&   -3
\end{bmatrix}.
\end{align*}

Suppose that the topological constraint is that agent $2$ cannot receive information from agent $4$, i.e. the $(2,4)$-entry of $F$ ($F_{24}=-1.5$) must be set to be zero. For this we first derive equation $V^\top F^\top_i = \psi_i^\top$ as follows:
\begin{align*}
\begin{bmatrix}
     1& 1& 1& 1& 1\\
     0& 1& 0& 0& 0\\
     0& 0& 1& 0& 1\\
     0& 0& 1& 1& 0\\
     0& 1&-1& 0& 1
\end{bmatrix}
\begin{bmatrix}
F_{21}\\
F_{22}\\
F_{23}\\
F_{24}\\
F_{25}
\end{bmatrix}
=
\begin{bmatrix}
0\\
-1\\
0\\
0\\
-4
\end{bmatrix}.
\end{align*}
Deleting $F_{24}$ from $F^\top_i$ and the 4th column of $V^\top$, we obtain the reduced equation (\ref{eq:contraint F equation}):
\begin{align*}
\begin{bmatrix}
     1& 1& 1& 1\\
     0& 1& 0& 0\\
     0& 0& 1& 1\\
     0& 0& 1& 0\\
     0& 1&-1& 1
\end{bmatrix}
\begin{bmatrix}
\hat{F}_{21}\\
\hat{F}_{22}\\
\hat{F}_{23}\\
\hat{F}_{25}
\end{bmatrix}
=
\begin{bmatrix}
0\\
-1\\
0\\
0\\
-4
\end{bmatrix}.
\end{align*}
Solve this equation for the unknowns $\hat{F}_{21}, \hat{F}_{22}, \hat{F}_{23}, \hat{F}_{25}$, 
we compute by (\ref{eq:Fi_hat}):
\begin{align*} \hat{F}_2^\top=
\begin{bmatrix}
\hat{F}_{21}\\
\hat{F}_{22}\\
\hat{F}_{23}\\
\hat{F}_{25}
\end{bmatrix}
=
\begin{bmatrix}
 1.8571\\
-1.4286\\
 0.8571\\
-1.2857
\end{bmatrix}.
\end{align*}
Finally set as (\ref{eq:F_hat}) the new feedback matrix
\begin{align*}
\hat{F}=
\begin{bmatrix}
         0&       0&       0&       0&      0\\
    1.8571& -1.4286&  0.8571&       0&-1.2857\\
         2&       0&      -3&       0&      1\\
         3&       0&       0&      -3&      0\\
         3&       0&       1&      -1&     -3
\end{bmatrix}.
\end{align*}
Observe that (only) the second row of $\hat{F}$ has entries all different from that of the original $F$, and the $(2,4)$-entry $\hat{F}_{24} = 0$. Moreover the eigenstructure of the new closed-loop matrix $A+B\hat{F} (=\hat{F})$ is 
\begin{align*}
\scalebox{0.9}{$\displaystyle $}
&\mbox{eigenvalues: } \{\hat{\lambda}_1,\hat{\lambda}_2,\hat{\lambda}_3,\hat{\lambda}_4,\hat{\lambda}_5\} = \{0, -1.4286, -2, -3, -4 \}\\
&\mbox{eigenvectors: } [\hat{v}_1 \ \hat{v}_2 \ \hat{v}_3 \ \hat{v}_4 \ \hat{v}_5] \\
&\hspace{1.5cm} = 
\scalebox{0.9}{$\displaystyle
\begin{bmatrix}
    1&         0&         0&         0&         0\\
    1&    2.2361&   -1.0477&   -0.8047&   -1.1352\\
    1&         0&   -1.3969&    1.4753&    1.3623\\
    1&         0&         0&    1.4753&         0\\
    1&         0&   -1.3969&         0&   -1.3623
\end{bmatrix}.
$}
\end{align*}
Hence these new eigenvalues/eigenvectors still satisfy (\ref{eq:con_eig}) and (\ref{eq:con_eigvec}), and therefore consensus is achieved despite of the imposed topological constraint. (In fact, since $\hat{\lambda}_2 < \lambda_2$, we have faster convergence with the new $\hat{F}$.)
\end{exmp}

\section{General Multi-Agent Systems}

So far we have considered the multi-agent system in (\ref{eq:linear system}), where the agents are uncoupled and each is (self-) stabilizable (the matrices $A, B$ are diagonal and $B$'s diagonal entries nonzero).  For (\ref{eq:linear system}) we have shown in Theorem~\ref{thm:form} that a state feedback control, based on eigenstructure assignment, always exists to drive the agents to a desired formation. 

More generally, however, the agents may be initially interconnected (owing to physical coupling or existence of communication channels), and/or some agents might not be capable of stabilizing themselves (though they can receive information from others). It is thus of interest to inquire, based on the eigenstructure assignment approach, what conclusions we can draw for formation control in these more general cases.

\subsection{Arbitrary Inter-Agent Connections}

First we consider the case where the agents have arbitrary initial interconnection, while keeping the assumption that they are individually stabilizable. That is, we consider the following multi-agent system
\begin{align} \label{eq:general linear system}
\dot{x} = Ax+Bu
\end{align}
where $x\in\mathbb{C}^n$, $u\in\mathbb{C}^n$, $A \in \mathbb{R}^{n\times n}$ 
and $B=$ diag$(b_1,\ldots,b_n)$ ($b_i \ne 0$).  The matrix $A$ is now an arbitrary real matrix, modeling an arbitrary (initial) communication topology among the agents.
  

It turns out, despite the general $A$ matrix, that the same conclusion as Theorem~\ref{thm:form} holds. 

\begin{thm}
\label{thm:formation problem}
Consider the multi-agent system~(\ref{eq:general linear system}) and let $f$ be a desired formation configuration. 
Then there always exists a state feedback control $u=Fx$ that achieves formation control, i.e.
\begin{align*}
(\forall x(0)\in\mathbb{C}^n)(\exists c\in\mathbb{C})
\lim_{t\to\infty} x(t)=cf.\\
\end{align*}
\end{thm}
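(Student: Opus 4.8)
The plan is to reduce the statement to the eigenstructure-assignment framework of Lemma~\ref{lem:dis_eig} and Proposition~\ref{prop:formation problem}, exactly as in the proof of Theorem~\ref{thm:form}, and to observe that the \emph{only} property of the pair $(A,B)$ on which that argument relied is the invertibility of $B$---a property that persists when the diagonal $A$ is replaced by an arbitrary real matrix. By Proposition~\ref{prop:formation problem} it suffices to assign to the closed-loop matrix $A+BF$ a simple eigenvalue $\lambda_1=0$ with eigenvector $v_1=f$, together with $n-1$ further \emph{distinct} eigenvalues $\lambda_2,\dots,\lambda_n$ of negative real part and \emph{linearly independent} eigenvectors $v_2,\dots,v_n$; that is, the eigenstructure (\ref{eq:con_eig})--(\ref{eq:con_eigvec}). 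Since the $\lambda_i$ are chosen distinct, I would invoke Lemma~\ref{lem:dis_eig}.

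First I would check the two hypotheses of Lemma~\ref{lem:dis_eig}. Since $B=\mathrm{diag}(b_1,\dots,b_n)$ with every $b_i\neq0$, the matrix $B$ is nonsingular, so $\mathrm{Ker}\,B=0$ holds at once. Controllability of $(A,B)$ likewise follows from the nonsingularity of $B$ alone---now for an \emph{arbitrary} $A$---because the controllability matrix $[B\ AB\ \cdots\ A^{n-1}B]$ already has rank $n$ by virtue of its leading block $B$. This is the one point at which the verification genuinely differs from Theorem~\ref{thm:form}, where controllability was read off directly from the diagonal structure.

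Next I would verify the assignability condition (\ref{eq:lemma1}). For each $\lambda_i$ I compute a basis of $\mathrm{Ker}[\lambda_i I-A\ \ B]$ using (\ref{eq:kernel_eq}): taking $N_1(\lambda_i)=I$ and $N_2(\lambda_i)=-B^{-1}(\lambda_i I-A)$ satisfies $(\lambda_i I-A)N_1(\lambda_i)+BN_2(\lambda_i)=0$, and the columns of the resulting $N(\lambda_i)$ are independent because its top block is the identity. Hence $\mathrm{Im}\,N_1(\lambda_i)=\mathbb{C}^n$ for every $i$, so the condition $v_i\in\mathrm{Im}\,N_1(\lambda_i)$ of Lemma~\ref{lem:dis_eig} holds \emph{trivially}, for arbitrary eigenvectors and in particular for the choice (\ref{eq:con_eigvec}). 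Lemma~\ref{lem:dis_eig} then yields an $F$ realizing the prescribed eigenstructure, and Proposition~\ref{prop:formation problem} guarantees that this $F$ solves Problem~\ref{prob:formation problem}.

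I anticipate no substantive obstacle: the argument is the proof of Theorem~\ref{thm:form} with the diagonality of $A$ dropped, the single delicate point being that the equality $\mathrm{Im}\,N_1(\lambda_i)=\mathbb{C}^n$---and thus the automatic satisfaction of (\ref{eq:lemma1})---rests entirely on $B$ being invertible and not on any structure of $A$.
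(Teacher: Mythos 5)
Your proof is correct and follows essentially the same route as the paper: reduce to Proposition~\ref{prop:formation problem}, obtain controllability and $\mathrm{Ker}\,B=0$ from the invertibility of the diagonal $B$ alone, and verify condition~(\ref{eq:lemma1}) of Lemma~\ref{lem:dis_eig} by exhibiting an explicit kernel basis with $\mathrm{Im}\,N_1(\lambda_i)=\mathbb{C}^n$. The only (immaterial) difference is your choice $N_1(\lambda_i)=I$, $N_2(\lambda_i)=-B^{-1}(\lambda_i I-A)$ versus the paper's $N_1(\lambda_i)=B$, $N_2(\lambda_i)=B^{-1}(A-\lambda_i I)B$; both are valid bases avoiding any commutativity assumption on $A$, both give $\mathrm{Im}\,N_1(\lambda_i)=\mathbb{C}^n$, and they in fact produce the same feedback matrix $F$.
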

\begin{proof}
The proof proceeds similarly to that of Theorem~\ref{thm:form}. First, for the diagonal matrix $B$ in (\ref{eq:general linear system}), we have (regardless of $A$) that the pair $(A,B)$ is controllable and Ker$B=0$.  

It is left to verify if the (distinct) eigenvalues $\lambda_1,...,\lambda_n$ and eigenvectors $v_1,...,v_n$ as in (\ref{eq:con_eig}) and (\ref{eq:con_eigvec}) satisfy the condition~(\ref{eq:lemma1}) of Lemma~1. 
Let $i \in [1,n]$. Since 
\begin{align*}
\begin{bmatrix}
\lambda_i I- A \ B
\end{bmatrix}\begin{bmatrix}
  B \\
  B^{-1} (A-\lambda_i I) B
\end{bmatrix} = 0
\end{align*}
we find a basis for Ker$[\lambda_i I - A \ B]$ and derive 
\begin{align*}
N_1(\lambda_i)&=B \\
N_2(\lambda_i)&=B^{-1} (A-\lambda_i I) B.
\end{align*}
Thus $\mbox{Im} N_1(\lambda_i) = \mathbb{C}^n$, and $v_i \in \mbox{Im} N_1(\lambda_i)$, i.e. the condition~(\ref{eq:lemma1}) of Lemma~1 holds.
Therefore, there always exists a state feedback $u=Fx$ such that 
the multi-agent system (\ref{eq:general linear system}) achieves the formation configuration $f$.
\end{proof}

In the proof of Theorem~\ref{thm:form}, we derived $N_1(\lambda_i)=B$ and $N_2(\lambda_i)=A-\lambda_i I$. Since $A$ was diagonal and diagonal matrices commute, there held
\begin{align*}
\begin{bmatrix}
\lambda_i I- A \ B
\end{bmatrix}\begin{bmatrix}
  B \\
  A-\lambda_i I
\end{bmatrix} = 0.
\end{align*}
For a general $A$ as in Theorem~\ref{thm:formation problem}, 
we have found instead $N_1(\lambda_i)=B$ and $N_2(\lambda_i)=B^{-1} (A-\lambda_i I) B$ that deal with arbitrary $A$ without depending on the commutativity of matrices.

Theorem~\ref{thm:formation problem} asserts that, as long as the agents are individually stabilizable, formation control is achievable by eigenstructure assignment regardless of how the agents are {\it initially} interconnected. The {\it final} topology, on the other hand, is in general determined by the initial connections `plus' additional ones resulted from the chosen eigenvalues/eigenvectors (as has been discussed in Section~III). 
It may also be possible, however, that the initial connections are `decoupled' by the corresponding entries of the synthesized feedback matrix. 
This is illustrated by the following example.

Consider again Example~1(i), but change $A$ from the zero matrix to the following
\begin{align*}
A= 
\begin{bmatrix}
         0&    0.5&         0&         0\\
         0&      0&         0&         0\\
      -0.5&      0&         0&         0\\
         0&      0&         2&         0
\end{bmatrix};
\end{align*}
that is, agents 1 and 2, 3 and 1, 4 and 3 are initially interconnected. 
Assigning the same eigenstructure as in Example~1(i), we obtain the feedback matrix
\begin{align*}
F=
\begin{bmatrix}
  -1&  -0.5&   0  &         j\\
   1&  -2&   0  &  -2 - j\\
  -0.5&   1&  -3&   1 - 2j\\
   0  &   0  &  -2&   0
\end{bmatrix}.
\end{align*}
Then the closed-loop matrix $A+BF$ (where $B$ is the identity matrix) is
\begin{align*}
A+BF=
\begin{bmatrix}
  -1&  0&   0  &         j\\
   1&  -2&   0  &  -2 - j\\
  -1&   1&  -3&   1 - 2j\\
   0  &   0  &  0&   0
\end{bmatrix}
\end{align*}
which is the same as the feedback matrix $F_1$ (as well as the closed-loop matrix) in Example~1(i). 
Thus despite the initial coupling, the final topology turns out to be the same as that of Example~1(i). 
In particular, in the final topology agents 1 and 2, 4 and 3 are {\it uncoupled} -- their initial couplings are `canceled' by the corresponding entries of the feedback matrix $F$.


\subsection{Existence of Non-Stabilizable Agents}
Continuing to consider arbitrary initial topology (i.e. general A), we further assume that some agents cannot stabilize themselves (i.e. the corresponding diagonal entries of $B$ in (\ref{eq:general linear system}) are zero). Equivalently, the non-stabilizable agents have {\it no} control inputs.
In this case, achieving a desired formation is possible only if those non-stabilizable agents may take advantage of information received from others (via connections specified by $A$). This is a problem of global formation stabilization with locally unstabilizable agents, which has rarely been studied in the literature. We aim to provide an answer using our top-down eigenstructure assignment based approach.  

Without loss of generality, assume that only the first $m$ $(<n)$ agents are stabilizable. 
Thus the multi-agent system we consider in this subsection is
\begin{align}\label{eq:simo system}
\dot{x}=Ax+Bu
\end{align}
where $x\in\mathbb{C}^n$, $u\in\mathbb{C}^m$, $A\in\mathbb{R}^{n\times n}$ and 
\begin{align*}
&B=
\begin{bmatrix}
b_1 	 & 	 &\\
         &\ddots  &\\
         &      & b_m \\
         &      &  \\
         &    \textrm{\LARGE 0}        &   
         \\
         & &
\end{bmatrix} \in\mathbb{R}^{n\times m} \ (m<n) \\
&b_i \neq 0, \ i \in [1,m].
\end{align*}

\begin{thm}
\label{thm:nonstab}
Consider the multi-agent system~(\ref{eq:simo system}) and let $f$ be a desired formation configuration.
Also let $\lambda_1,...,\lambda_n$ and $v_1,...,v_n$ be the desired eigenvalues and eigenvectors satisfying (\ref{eq:con_eig}) and (\ref{eq:con_eigvec}). If 

(i) the pair $(A,B)$ is controllable, and

(ii) $\mbox{Im} B \subseteq \mbox{Im} (A-\lambda_i I)$ for all $i \in [1,n]$, and 

(iii) $v_i \in \mbox{Im} N_1(\lambda_i)$ for all $i \in [1,n]$, where $N_1(\lambda_i)$ satisfies $(A-\lambda_i I)N_1(\lambda_i)=B$, 

\noindent then there exists a state feedback control $u=Fx$ that achieves formation control, i.e.
\begin{align*}
(\forall x(0)\in\mathbb{C}^n)(\exists c\in\mathbb{C})
\lim_{t\to\infty} x(t)=cf.\\
\end{align*}
\end{thm}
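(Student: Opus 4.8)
The plan is to reduce the claim to Lemma~\ref{lem:dis_eig} followed by Proposition~\ref{prop:formation problem}, just as in the proofs of Theorems~\ref{thm:form} and~\ref{thm:formation problem}, the essential new difficulty being that $B$ is no longer square/invertible. First I would check the two standing hypotheses of Lemma~\ref{lem:dis_eig}: controllability of $(A,B)$ is hypothesis~(i), and $\mbox{Ker}B=0$ holds because the top $m\times m$ block of $B$ is diagonal with nonzero diagonal entries, so $B$ has full column rank $m$. Taking the $\lambda_i$ distinct (as in the proof of Theorem~\ref{thm:form}, so that Lemma~\ref{lem:dis_eig} applies), the lemma then states that a feedback $F$ with $(A+BF)v_i=\lambda_i v_i$ for all $i$ exists if and only if $v_i\in\mbox{Im}N_1(\lambda_i)$, where the columns of $N(\lambda_i)$, partitioned into $N_1(\lambda_i)$ and $N_2(\lambda_i)$, form a basis of $\mbox{Ker}[\lambda_i I-A\ \ B]$.

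The crux, and where this theorem departs from Theorems~\ref{thm:form} and~\ref{thm:formation problem}, is the identification of $N_1(\lambda_i)$ and the role of condition~(\ref{eq:lemma1}). In those earlier results $B$ was invertible, one could take $N_1(\lambda_i)=B$, whence $\mbox{Im}N_1(\lambda_i)=\mathbb{C}^n$ and the membership $v_i\in\mbox{Im}N_1(\lambda_i)$ was automatic. Here $m<n$, so $\mbox{Im}N_1(\lambda_i)$ is a \emph{proper} subspace — indeed of dimension exactly $m$, since $\mbox{Ker}B=0$ forces the projection of the kernel onto its first $n$ coordinates to be injective — and the membership condition becomes a genuine constraint, which is precisely why hypotheses~(ii) and~(iii) are imposed. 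I would invoke hypothesis~(ii), $\mbox{Im}B\subseteq\mbox{Im}(A-\lambda_i I)$, to guarantee that $(A-\lambda_i I)N_1(\lambda_i)=B$ is solvable column by column, which is exactly what makes the $N_1(\lambda_i)$ named in hypothesis~(iii) exist.

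Next I would verify that this $N_1(\lambda_i)$ is truly the top block of a kernel basis. Setting $N_2(\lambda_i):=I_m$, the defining relation $(A-\lambda_i I)N_1(\lambda_i)=B$ gives $(\lambda_i I-A)N_1(\lambda_i)+BN_2(\lambda_i)=-B+B=0$, so the $m$ columns of $N(\lambda_i)$ built from $N_1(\lambda_i)$ and $I_m$ lie in $\mbox{Ker}[\lambda_i I-A\ \ B]$; they are independent because their lower block is $I_m$, and since the kernel has dimension $m$ (by the PBH/controllability rank condition) they form a basis. The one subtlety to record is that when $\lambda_i\in\sigma(A)$ the matrix $A-\lambda_i I$ is singular and $N_1(\lambda_i)$ is not unique; but any two solutions differ by a matrix whose columns lie in $\mbox{Ker}(A-\lambda_i I)\subseteq\mbox{Im}N_1(\lambda_i)$, so $\mbox{Im}N_1(\lambda_i)$ — the only quantity hypothesis~(iii) refers to — is independent of the chosen solution. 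With this, hypothesis~(iii) is exactly condition~(\ref{eq:lemma1}) of Lemma~\ref{lem:dis_eig}.

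Finally I would conclude: Lemma~\ref{lem:dis_eig} yields an $F$ with $(A+BF)v_i=\lambda_i v_i$ for all $i$, so by (\ref{eq:con_eig}) and (\ref{eq:con_eigvec}) the matrix $A+BF$ has a simple eigenvalue $0$ with eigenvector $f$ and all remaining eigenvalues in the open left half plane, whereupon Proposition~\ref{prop:formation problem} gives $\lim_{t\to\infty}x(t)=cf$. I expect the main obstacle to be not the closing invocation of the lemma but the intermediate bookkeeping forced by the non-invertible $B$: namely reading hypothesis~(ii) as exactly the solvability of $(A-\lambda_i I)N_1=B$, and confirming that $\mbox{Im}N_1(\lambda_i)$ is well defined despite the non-uniqueness of $N_1$ when $\lambda_i$ is an eigenvalue of $A$.
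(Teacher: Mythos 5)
Your proof follows the paper's argument essentially step for step: $\mbox{Ker}B=0$ from the full-column-rank structure of $B$, hypothesis~(ii) read as solvability of $(A-\lambda_i I)N_1(\lambda_i)=B$ with the choice $N_2(\lambda_i)=I_m$, hypothesis~(iii) identified with condition~(\ref{eq:lemma1}) of Lemma~\ref{lem:dis_eig}, and Proposition~\ref{prop:formation problem} to close — indeed you supply more detail than the paper does (e.g.\ that the resulting $m$ columns of $N(\lambda_i)$ really form a basis of the $m$-dimensional kernel of $[\lambda_i I - A\ \ B]$). The only quibble is your well-definedness digression: the containment $\mbox{Ker}(A-\lambda_i I)\subseteq\mbox{Im}N_1(\lambda_i)$ is asserted without justification, but the whole issue is moot because hypotheses~(i) and~(ii) together force $A-\lambda_i I$ to be invertible (controllability gives $\mbox{rank}[\lambda_i I-A\ \ B]=n$, and $\mbox{Im}B\subseteq\mbox{Im}(A-\lambda_i I)$ then gives $\mbox{rank}(A-\lambda_i I)=n$), so $N_1(\lambda_i)$ is in fact unique and no case split on $\lambda_i\in\sigma(A)$ arises.
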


\begin{proof}
First observe from (\ref{eq:simo system}) that Ker$B=0$, since $B$'s columns are linearly independent. 
Now let $i \in [1,n]$. Since $(A,B)$ is controllable (condition~(i)), there exist $N_1(\lambda_i)$ and $N_2(\lambda_i)$ such that (\ref{eq:kernel_eq}) holds. Setting $N_2(\lambda_i)=I$, we derive from (\ref{eq:kernel_eq}) the following matrix equation
\begin{align} \label{eq:thm4}
\left(
A -\lambda_i I 
\right)
N_1(\lambda_i) = 
B.
\end{align}
Since $\mbox{Im} B \subseteq \mbox{Im} (A-\lambda_i I)$ (condition~(ii)), 
this equation has a solution $N_1(\lambda_i)$ (which is determined by $A, B, \lambda_i$).  
Finally, since $v_i \in \mbox{Im} N_1(\lambda_i)$ (condition~(iii)), the condition~(\ref{eq:lemma1}) of Lemma~\ref{lem:dis_eig} is satisfied. 
Therefore the desired eigenvalues and eigenvectors satisfying (\ref{eq:con_eig}) and (\ref{eq:con_eigvec}) may be assigned by a state feedback control $u=Fx$, i.e. formation control is achieved.
\end{proof}

Theorem~\ref{thm:nonstab} provides sufficient conditions to ensure solvability of the formation control problem for multi-agent systems with non-stabilizable agents. In the following we illustrate this result by working out a concrete example, where $A$ represents a directed line topology and there is only one agent that is stabilizable (i.e. $B$ is simply a vector). 

\begin{exmp}
Consider the multi-agent system $\dot{x} = Ax+Bu$ with 
\begin{align*} 
A=
\begin{bmatrix}
a_1 	 & 0	&	 &0  \\
\hat{a}_2&a_2	&	 &   \\
	 &\ddots&\ddots  &   \\
0	 &	&\hat{a}_n&a_n
\end{bmatrix}, \ \ \  
B = 
\begin{bmatrix}
b_1\\
0\\
\vdots\\
0
\end{bmatrix}
\end{align*}
where $a_1,...,a_n,\hat{a}_2,...,\hat{a}_n,b_1$ are nonzero.
Namely $A$ represents a directed line topology with agent 1 the root, and $B$ means that only agent 1 is stabilizable. 
Thus this is a {\it single-input} multi-agent system -- by controlling only the root of a directed line. 

First, it is verified that $(A,B)$ is controllable, i.e. condition~(i) of Theorem~\ref{thm:nonstab} is satisfied. 
To ensure condition~(ii), $\mbox{Im} B \subseteq \mbox{Im} (A-\lambda_i I)$, it suffices to choose each desired eigenvalue 
$\lambda_i$ ($i \in [1,n]$) such that 
$\lambda_i \neq a_j$ for $j\in [1,n]$ (i.e. every eigenvalue is distinct from the nonzero diagonal entries of $A$).
At the same time, these eigenvalues must satisfy (\ref{eq:con_eig}).

Having condition~(ii) hold, equation~(\ref{eq:thm4}) has a solution $N_1(\lambda_i)$. Let us solve (\ref{eq:thm4})
\begin{align*}
\begin{bmatrix}
a_1 - \lambda_i 	 &0 	&	 &0  \\
\hat{a}_2&a_2- \lambda_i	&	 &   \\
	 &\ddots&\ddots  &   \\
0	 &	&\hat{a}_n&a_n- \lambda_i
\end{bmatrix}
N_1(\lambda_i)
=
\begin{bmatrix}
b_1\\
0\\
\vdots\\
0
\end{bmatrix}
\end{align*}
and obtain an explicit solution
\begin{align*}
N_1(\lambda_i)
=
\begin{bmatrix}
\frac{b_1}{a_1 - \lambda_i}\\
-\frac{\hat{a}_2 b_1}{(a_1- \lambda_i) (a_2- \lambda_i)}\\
\vdots\\
(-1)^{n-1}\frac{\hat{a}_2\cdots\hat{a}_n b_1}{(a_1- \lambda_i)(a_2- \lambda_i)\cdots (a_n- \lambda_i)}
\end{bmatrix}.
\end{align*}
Hence to ensure condition~(iii) of Theorem~\ref{thm:nonstab}, 
we must choose each desired eigenvector 
$v_i$ ($i \in [1,n]$) such that
$v_i \in \mbox{Im} N_1(\lambda_i)$ and (\ref{eq:con_eigvec}) is satisfied. 
In particular, for $i=1$ we have $\lambda_i=0$ and $v_1=f$; thus $v_1 \in \mbox{Im} N_1(\lambda_1)$ means that the formation 
vector $f$ must be such that 
\begin{align} \label{eq:SIMO}
f = c
\begin{bmatrix}
\frac{b_1}{a_1} &
-\frac{\hat{a}_2b_1}{a_1a_2}&
\cdots &
(-1)^{n-1}\frac{\hat{a}_2\cdots\hat{a}_nb_1}{a_1a_2\cdots a_n}
\end{bmatrix}^{\top}
\end{align}
where $c \in \mathbb{C}$ ($c \neq 0$). This characterizes the set of all achievable formation configurations for the single-input multi-agent system under consideration. 

We conclude that, by controlling only one agent, indeed the root agent of a directed line topology, it is not possible to achieve arbitrary formation configurations but those determined by the nonzeros entries of the matrices $A$ and $B$ in the specific manner as given in (\ref{eq:SIMO}).  

\end{exmp}


\section{Hierarchical Eigenstructure Assignment}

In the previous sections, we have shown that a control gain matrix $F$ can always be computed (as long as every agent is stabilizable) such that the multi-agent formation Problem~1 is solved. Computing such $F$ by (\ref{eq:dis_real}) (see the eigenstructure assignment procedure in Section~II) has complexity $O(n^3)$, where $n$ is the number of agents. Consequently the computation cost becomes expensive as the number of agents increases.   

To address this issue of centralized computation, we propose in this section a {\it hierarchical synthesis procedure}. We shall show that the control gain matrix $F$ computed by this hierarchical procedure again solves Problem~1, which moreover significantly improves computational efficiency (empirical evidence provided in Section~VII).  

For clarity of presentation, let us return to consider the multi-agent system~(\ref{eq:linear system}), and Problem~1 with the desired formation configuration $f \in \mathbb{C}^{n}$ ($f \neq 0$).  Partition the agents into $l \, (\geq 1)$ pairwise disjoint groups. Let group $k \, (\in [1,l])$ have $n_k \, (\geq 1)$ agents; $n_k$ may be different and $\Sigma_{k=1}^l n_k = n$. 

Now for the configuration $f$ and $x, u, A, B$ in (\ref{eq:linear system}), write in accordance with the partition (possibly with reordering)
\begin{align*}
&f =
\begin{bmatrix}
g_1 \\
\vdots \\
g_l
\end{bmatrix},
x =
\begin{bmatrix}
y_1 \\
\vdots \\
y_l
\end{bmatrix},  
u=
\begin{bmatrix}
w_1 \\
\vdots \\
w_l
\end{bmatrix}, \\
&A =
\begin{bmatrix}
A_1 & & \\
& \ddots & \\
& & A_l
\end{bmatrix},
B =
\begin{bmatrix}
B_1 & & \\
& \ddots & \\
& & B_l
\end{bmatrix}
\end{align*}
where $g_k, y_k, w_k \in \mathbb{C}^{n_k}$ and $A_k, B_k \in \mathbb{C}^{n_k \times n_k}$, $k \in [1,l]$. Thus for each group $k$, the dynamics is
\begin{align}
\label{eq:group_dynamics}
\dot{y}_k=A_k y_k + B_k w_k.
\end{align}

For later use, also write $g_{k1}, y_{k1}, w_{k1}$ (resp. $A_{k1}, B_{k1}$) for the first component of $g_k$, $y_k$, $w_k$ (resp. (1,1)-entry of $A_k, B_k$), and $g_0 := [g_{11} \cdots g_{l1}]^\top$, $y_0 := [y_{11} \cdots y_{l1}]^\top$, $w_0 := [w_{11} \cdots w_{l1}]^\top$, $A_0 := \mbox{diag}(A_{11},\cdots,A_{l1})$, $B_0 := \mbox{diag}(B_{11},\cdots,B_{l1})$.

The vector $g_k$ ($k \in [1,l]$) is the {\it local} formation configuration for group $k$, while $g_0$ is the formation configuration for the set of the first component agent from each group. We assume that these configurations are all nonzero, i.e. $g_k \neq 0$ for $k \in [1,l]$ and $g_0 \neq 0$. 
Now we present the hierarchical synthesis procedure.

(i) For each group $k \in [1,l]$ and its dynamics (\ref{eq:group_dynamics}), compute $F_k$ by (\ref{eq:dis_real}) such that 
$A_k+B_kF_k$ has a simple eigenvalue $0$ with the corresponding eigenvector $g_k$, and other eigenvalues have negative real parts;
moreover the topology defined by $F_k$ has a unique root node $y_{k1}$ (e.g. star or line by the method given in Section~III.A).

(ii) Treat $\{y_{k1} | k \in[1,l]\}$ (the group leaders) as a higher-level group, with the dynamics
\begin{align}
\label{eq:highlevel_dynamics}
\dot{y}_0=A_0 y_0 + B_0 w_0.
\end{align}
Compute $F_0\in\mathbb{C}^{l \times l}$ by (\ref{eq:dis_real}) such that 
$A_0+B_0F_0$ has a simple eigenvalue $0$ with the corresponding eigenvector $g_0$, and other eigenvalues have negative real parts.

(iii) Set the control gain matrix $F := F^{\mbox{low}} + F^{\mbox{high}}$, where 
\begin{align*}
F^{\mbox{low}}  :=
\begin{bmatrix}
F_1 & & \\
& \ddots & \\
& & F_l
\end{bmatrix}
\end{align*}
and $F^{\mbox{high}}$ is partitioned according to $F^{\mbox{low}}$, with each block $(i,j)$, $i,j\in[1,l]$
\begin{align*}
(F^{\mbox{high}})_{ij} &= (F_0)_{ij} \cdot
\begin{bmatrix}
1 & 0 & \cdots & 0 \\
0 & 0 & \cdots & 0 \\
\vdots &  \vdots  & \ddots & \vdots \\
0 & 0 & \cdots & 0
\end{bmatrix} \\
&=  
\begin{bmatrix}
(F_0)_{ij} & 0 & \cdots & 0 \\
0 & 0 & \cdots & 0 \\
\vdots &  \vdots  & \ddots & \vdots \\
0 & 0 & \cdots & 0
\end{bmatrix}.
\end{align*}

The computational complexity of Step~(i) is $O(\hat{n}^{3})$, where $\hat{n} := \max\{n_1,...,n_l\}$; and Step~(ii) is $O(l^{3})$. Let $\tilde{n} := \max\{ \hat{n}, l \}$. Then the complexity of the entire hierarchical synthesis procedure is $O(\tilde{n}^{3})$. With proper group partition, this hierarchical procedure can significantly reduce computation time, as demonstrated by an empirical study in Section~VII. 

Note that in Step~(i) of the above procedure, requiring the topology defined by each $F_k$ to have a unique root, i.e. a single leader, is for simplicity of presentation.  It can be extended to the case of multiple leaders, and then in Step~(ii) treat all the leaders at the higher-level. On the other hand, the number of leaders should be kept small such that the high-level control synthesis in Step~(ii) can be done efficiently.

The correctness of the hierarchical synthesis procedure is asserted in the following.

\begin{thm}
Consider the multi-agent system~(\ref{eq:linear system}) and let $f$ be a desired formation configuration.
Then the state feedback control $u=Fx$ synthesized by the hierarchical synthesis procedure solves Problem~1, i.e.
\begin{align*}
(\forall x(0)\in\mathbb{C}^n)(\exists c\in\mathbb{C})
\lim_{t\to\infty} x(t)=cf.\\
\end{align*}
\end{thm}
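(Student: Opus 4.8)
The plan is to invoke Proposition~\ref{prop:formation problem}: it suffices to show that the closed-loop matrix $M := A+BF$, with $F = F^{\mbox{low}}+F^{\mbox{high}}$ produced by the procedure, has a simple eigenvalue $0$ whose eigenvector is exactly $f$, while all remaining eigenvalues have negative real parts. Thus the entire argument reduces to an eigenstructure analysis of $M$, and the real work is to exploit the special sparsity of $F^{\mbox{high}}$ (a single nonzero entry per block, placed at the leader position) together with the leader/follower structure induced by Step~(i).

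First I would record the structure of $M$ by writing $M = (A+BF^{\mbox{low}}) + BF^{\mbox{high}}$. The first summand is block diagonal with blocks $M_k := A_k+B_kF_k$, and by Step~(i) together with the star/line construction of Section~III.A each $M_k$ has a simple eigenvalue $0$ with eigenvector $g_k$, has its remaining eigenvalues in the open left half plane, and has a \emph{zero leader row} (since $y_{k1}$ is the unique root, its row carries no off-diagonal entries, and the $\lambda_1=0$ construction makes its diagonal entry vanish as well). The second summand $BF^{\mbox{high}}$ is supported only on the leader rows and leader columns, coupling leader $i$ to leader $j$ through the single entry inherited from $F_0$. Reordering the state as (leaders, followers), these two observations combine to present $M$ in block lower-triangular form: the leaders form a \emph{closed} subsystem that reproduces the high-level closed loop designed in Step~(ii), while each follower block is governed by the restriction of $M_k$ to its non-root nodes and receives from its own leader through the off-diagonal coupling alone.

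Next I would read off the spectrum from this triangular form. The eigenvalues of $M$ are the union of those of the leader subsystem and those of the follower subsystems. By Step~(ii) the leader subsystem contributes a single $0$ together with eigenvalues of negative real part; by Step~(i) each follower subsystem contributes only eigenvalues of negative real part. Hence $M$ has a simple eigenvalue $0$ and all other eigenvalues in the open left half plane. It then remains to identify $f$ as the $0$-eigenvector: on the follower rows $Mf$ agrees with the follower part of $M_kg_k=0$, and on the leader rows $Mf$ restricts to the high-level matrix applied to $g_0$, which is $0$ by the design of $F_0$. Thus $Mf=0$, and Proposition~\ref{prop:formation problem} applied to this eigenstructure completes the proof.

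The main obstacle I anticipate is the bookkeeping on the leader rows, where $F^{\mbox{low}}$ and $F^{\mbox{high}}$ act simultaneously. One must verify that the low-level feedback contributes nothing there (the root row of $M_k$ vanishes), so that the leader dynamics is supplied solely by $F^{\mbox{high}}$ and coincides exactly with the high-level closed loop of Step~(ii); this is precisely what decouples the leaders from the followers and keeps $g_0$, hence $f$, as the $0$-eigenvector. Establishing the block lower-triangular form cleanly---so that the spectrum genuinely splits as a union and no extra $0$ eigenvalue leaks in from the follower blocks or from the leader coupling---is the step that requires care; once it is in place, the eigenvalue count and the identification of the eigenvector are immediate.
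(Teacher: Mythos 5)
Your argument is essentially the paper's own proof: the same reordering of the state into (leaders, followers), the same block lower-triangular decomposition with the high-level closed loop in the top-left block and the stable follower blocks on the diagonal (coupled to their own leaders only), the same identification of the zero-eigenvector from the group eigenvectors $g_k$ and $g_0$, and the same final appeal to Proposition~1. The only cosmetic difference is that you first split $A+BF$ as $(A+BF^{\mathrm{low}})+BF^{\mathrm{high}}$ before permuting, whereas the paper permutes directly; the substance, including the key observation that the root row of each $A_k+B_kF_k$ vanishes, is identical.
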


\begin{proof}
For each $k \in [1,l]$ let $y'_k := [y_{k2} \ \cdots \ y_{kn_k}]^\top$ and $g'_k := [g_{k2} \ \cdots \  g_{kn_k}]^\top \in \mathbb{C}^{n_k - 1}$. 
Thus $y'_k$ and $g'_k$ are $y_k, g_k$ with the first element removed.
By Step~(i) of the hierarchical synthesis procedure, since $y_{k1}$ is the unique root node, we can write $\dot{y}_k = (A_k+B_kF_k)y_k$ as follows:
\begin{align*}
\left[
\begin{array}{c}
\dot{y}_{k1} \\ \hline
 \\
\dot{y}'_{k} \\
 \\
\end{array}
\right] =
\left[
\begin{array}{c|ccc}
0 & \hspace{-0.34cm}| & 0 & \\ \hline
   & &    & \\
H_k & & G_k & \\
   & &    &
\end{array}
\right]
\left[
\begin{array}{c}
y_{k1} \\ \hline
 \\
y'_{k} \\
 \\
\end{array}
\right].
\end{align*}
Then by the eigenstructure of $A_k+B_kF_k$, all the eigenvalues of $G_k$ have negative real parts and 
\begin{align} \label{eq:hier_proof}
\left[
\begin{array}{c|ccc}
H_k & & G_k & 
\end{array}
\right]
\left[
\begin{array}{c}
g_{k1} \\ \hline
 \\
g'_{k} \\
 \\
\end{array}
\right] = 0.
\end{align}
Reorder $x = [y_1^\top \ \cdots \ y_l^\top]^\top$ to get $\hat{x} := [y_0^\top \ {y^{\prime}_1}^{\top} \cdots \  {y^{\prime}_n}^{\top}]^\top$. Then there is a permutation matrix that similarly transforms the control gain matrix $F$ in Step~(iii) to $\hat{F}$, and $\dot{\hat{x}} = \hat{F} \hat{x}$ is
\begin{align*}
\left[
\begin{array}{c}
\dot{y}_{0} \\ \hline
\dot{y}'_{1} \\
\vdots \\
\dot{y}'_{l}
\end{array}
\right] =
\left[
\begin{array}{c|ccc}
       A_0+B_0F_0        &         & 0 & \\ \hline
\hspace{-1.1cm} H_1                              & G_1 &    & \\
              \ddots             &         & \ddots & \\
                \hspace{1.1cm}               H_l &        &     & G_l 
\end{array}
\right]
\left[
\begin{array}{c}
y_{0} \\ \hline
y'_{1} \\
\vdots \\
y'_{l}
\end{array}
\right].
\end{align*}
It then follows from the eigenstructure of $A_0+B_0F_0$ assigned in Step~(ii) and (\ref{eq:hier_proof}) above that the matrix $\hat{F}$ has a simple eigenvalue $0$ with the corresponding eigenvector $\hat{f} := [g_0^\top \ g_1^{'\top} \ \cdots \ g_l^{'\top}]^\top$, and other eigenvalues have negative real parts. Hence (cf. Proposition~\ref{prop:formation problem}), 
\begin{align*}
(\forall \hat{x}(0)\in\mathbb{C}^n)(\exists \hat{c}\in\mathbb{C})
\lim_{t\to\infty} \hat{x}(t)=\hat{c} \hat{f}.
\end{align*}
Since $\hat{x}$ (resp. $\hat{f}$) is just a reordering of $x$ (resp. $f$), the conclusion follows and the proof is complete.
\end{proof}

\section{Rigid Formation and Circular Motion}

In this section we show that our method of eigenstructure assignment may be easily extended to address problems of rigid formation and circular motion.

\subsection{Rigid Formation}

First, we extend our method to study the problem of achieving a rigid formation, one that has translational and rotational freedom but fixed size.
\begin{prob}
\label{prob:rigidformation problem}
Consider the multi-agent system~(\ref{eq:linear system}) and specify $f\in\mathbb{C}^n$ ($f \neq 0$) and $d >0$.
Design a control $u$ such that for every initial condition $x(0)$, $\lim_{t\to \infty}x(t)=c{\bf 1} + d f e^{j\theta}$ for some $c \in \mathbb{C}$ and $\theta \in [0, 2\pi)$.
\end{prob}

In Problem~2, the goal of the multi-agent system (\ref{eq:linear system}) is to achieve a {\it rigid} formation $d f$, with translational freedom in $c$, rotational freedom in $\theta$, and fixed size $d$.

We now present the {\it rigid-formation synthesis procedure}.

(i) Compute $F$ by (\ref{eq:dis_real}) such that 
$A+BF$ has {\it two} eigenvalues $0$ with the corresponding (non-generalized) eigenvectors {\bf 1} and $f$, and other eigenvalues have negative real parts;\footnote{For repeated eigenvalues with non-generalized eigenvectors, the eigenstructure assignment result Lemma~\ref{lem:dis_eig} and the computation of control gain matrix $F$ in (\ref{eq:dis_real}) remain the same as for the case of distinct eigenvalues.} moreover the topology defined by $F$ is 2-rooted\footnote{A 2-rooted topology is one where there exist 2 nodes from which every other node $v$ can be reached by a directed path after removing an arbitrary node other than $v$ \cite{LinWanHanFu:14}.} with exactly 2 roots (say nodes 1 and 2). This topology may be achieved by assigning appropriate eigenstructures, e.g. 
\begin{align} 
&\mbox{eigenvalues: } \lambda_1=\lambda_2=0, \lambda_3,\ldots,\lambda_n \mbox{ distinct } \notag\\
& \hspace{1.9cm} \mbox{and }  \mbox{Re}(\lambda_3),\ldots,\mbox{Re}(\lambda_n)<0 \notag\\
&\mbox{eigenvectors: } 
\underset{
\begin{array}{@{}c@{}}
\mbox{(independent)}
\end{array}
}
{[v_1 \ v_2 \ v_3 \cdots v_n]}=
\left[
\begin{array}{ccccc}
1 & f_1 & 0 &\cdots & 0 \\
1 & f_2	&0		&\cdots	&0 \\
1 & f_3	&1		&\cdots	&0 \\
\vdots & \vdots	&\vdots	&\ddots	&\vdots \\
1 & f_n	&0	&\cdots	&1 \\
\end{array}
\right] 
\label{eq:2rooted}
\end{align}

(ii) Let $f_1,f_2$ be the first two components of $f$, and set
{\small \begin{align*}
\begin{bmatrix}
\dot{x}_1 \\
\dot{x}_2
\end{bmatrix} = 
\begin{bmatrix}
(x_2-x_1)(||x_2-x_1||^2-d^2|f_2-f_1|^2) \\
(x_1-x_2)(||x_1-x_2||^2-d^2|f_1-f_2|^2)
\end{bmatrix}
=: r(x_1,x_2).
\end{align*}}

(iii) Set the control 
\begin{align} \label{eq:rigidformation_control}
u := Fx + 
B^{-1}\left[
\begin{array}{c}
r(x_1,x_2) \\ \hline
0
\end{array}
\right].
\end{align}

The idea of the above synthesis procedure is to first use eigenstructure assignment to achieve a desired formation configuration with two leaders, and then control the size of the formation by stabilizing the distance between the two leaders to the prescribed $d$. The latter is inspired by \cite{LinWanHanFu:14}.
Our result is the following.

\begin{prop} \label{prop:rigidformation}
Consider the multi-agent system (\ref{eq:linear system}) and let $f \in \mathbb{C}^n$, $d >0$. Then the control $u$ in (\ref{eq:rigidformation_control}) synthesized by the rigid-formation synthesis procedure solves Problem~2 for all initial conditions $x(0)$ with $x_1(0) \neq x_2(0)$.
\end{prop}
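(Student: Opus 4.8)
The plan is to decouple the analysis into two parts that mirror the two-stage construction of the control law~(\ref{eq:rigidformation_control}): the linear eigenstructure-assignment part governed by $Fx$, and the nonlinear distance-stabilizing part governed by $r(x_1,x_2)$. The key observation is that the control input to agents $1$ and $2$ has an extra nonlinear term $B^{-1}r(x_1,x_2)$, whereas agents $3,\ldots,n$ are driven purely by the linear feedback $Fx$. I would first write out the closed-loop dynamics and note that, because the topology from Step~(i) is $2$-rooted with roots $1$ and $2$ (and these two roots do not receive information from the followers), the pair $(x_1,x_2)$ evolves autonomously, decoupled from $x_3,\ldots,x_n$.

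First I would analyze the leader subsystem. Since nodes $1$ and $2$ are the two roots and $\lambda_1=\lambda_2=0$ with eigenvectors ${\bf 1}$ and $f$, the linear part $Fx$ restricted to the leaders contributes zero to $\dot{x}_1,\dot{x}_2$ (the first two rows of $A+BF$ corresponding to the roots act trivially on the leader pair), so the leader dynamics reduce to $\dot{x}_1 = r_1$, $\dot{x}_2 = r_2$ with $r=r(x_1,x_2)$ from Step~(ii). I would then introduce the relative variable $z := x_2 - x_1$ and derive $\dot{z} = -2\,z\,(\|z\|^2 - d^2|f_2-f_1|^2)$, which shows that $\|z\|$ is governed by a scalar gradient-type flow. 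The plan is to use a Lyapunov function such as $V(z) = (\|z\|^2 - d^2|f_2 - f_1|^2)^2$ to prove that, for any initial condition with $x_1(0)\neq x_2(0)$ (so $z(0)\neq 0$), the quantity $\|z(t)\|$ converges to $d|f_2-f_1|$; the hypothesis $x_1(0)\neq x_2(0)$ is exactly what rules out the unstable equilibrium $z=0$. From this one concludes that the leader pair converges (up to translation and rotation) to the prescribed rigid separation, i.e. $x_2(t)-x_1(t)\to d(f_2-f_1)e^{j\theta}$ for some limiting phase $\theta$.

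Next I would treat the follower subsystem. With the leaders' trajectories regarded as exogenous inputs, the followers obey a linear system whose internal matrix is the block $G$ (the follower-to-follower part of $A+BF$, analogous to $G_k$ in the hierarchical proof), all of whose eigenvalues have negative real parts by construction in Step~(i). Using the fact that $({\bf 1},f)$ spans the zero-eigenvalue eigenspace of $A+BF$, I would verify that the particular configuration $c{\bf 1} + d f e^{j\theta}$ is consistent with the interconnection equations (the analogue of~(\ref{eq:hier_proof})), so that the followers track the affine combination of the leaders dictated by $f$. Since $G$ is Hurwitz, the follower tracking error decays to zero, and $\lim_{t\to\infty} x(t) = c{\bf 1} + d f e^{j\theta}$ for appropriate $c$ and $\theta$, which is precisely Problem~2.

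I expect the main obstacle to be the nonlinear leader analysis in the second step, specifically establishing global convergence of $\|z(t)\|$ to the desired value while carefully excluding the origin: the flow $\dot z = -2z(\|z\|^2 - \rho^2)$ has an unstable equilibrium at $z=0$, and one must confirm that the decoupling of the roots from the followers is exact so that the Lyapunov argument is not corrupted by cross-terms from $Fx$. A secondary subtlety is bookkeeping the permutation/reordering so that ``roots $1$ and $2$'' and the block-triangular structure line up, together with checking that the limiting phase $\theta$ is common to the whole formation; both are routine once the leader-follower decoupling and the Hurwitz property of $G$ are in place.
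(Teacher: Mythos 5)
Your proposal is correct in outline, but it takes a genuinely different route from the paper on the one step that actually matters. The paper's proof is very short: it (a) notes that the double eigenvalue $0$ with eigenvectors ${\bf 1}$ and $f$ is assignable (so the linear closed loop converges to $c{\bf 1}+c'f$, cf.\ Proposition~\ref{prop:formation problem}), (b) verifies, exactly as in Proposition~\ref{prop:star topology}, that the eigenstructure~(\ref{eq:2rooted}) yields a $2$-rooted topology with roots $1$ and $2$, and then (c) \emph{cites} Theorem~4.4 of \cite{LinWanHanFu:14} to conclude that the nonlinear term $r(x_1,x_2)$ forces the scale factor to be $d e^{j\theta}$ whenever $x_1(0)\neq x_2(0)$. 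You instead prove step (c) directly: the leader pair decouples, $z=x_2-x_1$ obeys $\dot z=-2z(\|z\|^2-d^2|f_2-f_1|^2)$ (note $\dot z/z$ is real, so $\arg z$ is constant and only the scalar ODE for $\|z\|^2$ needs analysis, which makes your Lyapunov step essentially trivial and also fixes $\theta$ from the initial data), and the followers form a cascade driven by the converging leaders through a Hurwitz block $G$, with the kernel relations $H[1\ 1]^\top+G{\bf 1}=0$ and $H[f_1\ f_2]^\top+G[f_3\cdots f_n]^\top=0$ pinning the limit to $c{\bf 1}+df e^{j\theta}$. Your version is more self-contained and arguably more rigorous than the paper's, since Proposition~\ref{prop:formation problem} strictly speaking addresses only the linear closed loop and the paper leans entirely on the external theorem for the perturbed system; the cost is length. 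One caveat: the autonomy of the leader pair (and the conservation of $x_1+x_2$, which you implicitly need for $x_1,x_2$ to converge individually rather than just their difference) follows from rows $1$ and $2$ of $A+BF$ being identically zero under the specific eigenstructure~(\ref{eq:2rooted}), not from $2$-rootedness per se --- a $2$-rooted graph can in general have edges into its roots --- so you should state that you are working with~(\ref{eq:2rooted}) (as the paper's own proof also does) rather than with an arbitrary $2$-rooted $F$.
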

\begin{proof}
First, by a similar argument to that in the proof of Theorem~\ref{thm:form}, we can show that the desired eigenvalues/eigenvectors (two eigenvalues at 0 with eigenvectors ${\bf 1}$ and $f$; all other eigenvalues with negative real parts) may always be assigned for the multi-agent system (\ref{eq:linear system}).  As a result (cf. Proposition~\ref{prop:formation problem}), 
\begin{align*}
(\forall x(0)\in\mathbb{C}^n)(\exists c,c'\in\mathbb{C})
\lim_{t\to\infty} x(t)=c{\bf 1} + c' f.
\end{align*}
Moreover, choosing the eigenstructure in (\ref{eq:2rooted})
and following similarly to the proof of Proposition~\ref{prop:star topology}, we can show that the resulting topology defined by $F$ is 2-rooted with nodes 1 and 2 the only two roots.

With the 2-rooted topology and the design in Step~(ii), it follows from \cite[Theorem~4.4]{LinWanHanFu:14} that for all $x(0)$ with $x_1(0) \neq x_2(0)$, we have $c' = d f e^{j\theta}$ for some $\theta \in [0, 2\pi)$. 
\end{proof}

An illustrative example of achieving rigid formations is provided in Section~VII below.

\subsection{Circular Motion}
We apply the eigenstructure assignment approach to solve a cooperative circular motion problem, in which the agents all circle around the same center while keeping a desired formation configuration. This cooperative task may find useful applications in target tracking and encircling (e.g. \cite{KimSug:07,LanYanLin:10}). 

\begin{prob}
\label{prob:circularmotion problem}
Consider the multi-agent system~(\ref{eq:linear system}) and specify $f\in\mathbb{C}^n$ ($f \neq 0$) and $b \in \mathbb{R}$ ($b \neq 0$).
Design a state feedback control $u=Fx$ such that for every initial condition $x(0)$, $\lim_{t\to \infty}x(t)=c{\bf 1} + c' f e^{bjt}$ for some $c,c' \in \mathbb{C}$ and $j=\sqrt{-1}$.
\end{prob}

In Problem~3, the goal is that the agents of (\ref{eq:linear system}) all circle around the same center $c$ at rate $b$, while keeping the formation configuration $f$ scaled by $|c'|$. 

Our result is the following.

\begin{prop} \label{prop:circularmotion}
Consider the multi-agent system~(\ref{eq:linear system}) and let $f \in \mathbb{C}^n$, $b \in \mathbb{R}$.
Then there always exists a state feedback control $u=Fx$ that solves Problem~3.
\end{prop}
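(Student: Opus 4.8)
The plan is to reduce Problem~3 to the already-solved formation problem by exploiting the fact that a rotating solution $fe^{bjt}$ is precisely what one obtains from a closed-loop eigenvalue placed on the imaginary axis at $bj$. The key observation is that the desired limit $c\mathbf{1} + c'fe^{bjt}$ is a superposition of two modes: a constant mode $c\mathbf{1}$ corresponding to eigenvalue $0$ with eigenvector $\mathbf{1}$, and a rotating mode $c'fe^{bjt}$ corresponding to the purely imaginary eigenvalue $bj$ with eigenvector $f$. All remaining modes must decay. So the natural strategy is to invoke eigenstructure assignment to place exactly this spectrum.

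Concretely, I would first set up the target eigenstructure: assign $\lambda_1 = 0$ with eigenvector $v_1 = \mathbf{1}$, assign $\lambda_2 = bj$ with eigenvector $v_2 = f$, and then choose $\lambda_3,\ldots,\lambda_n$ to be distinct values with strictly negative real parts, with $v_3,\ldots,v_n$ chosen freely so that $\{v_1,\ldots,v_n\}$ is a linearly independent set. I would then verify assignability exactly as in the proof of Theorem~\ref{thm:form}: since $A,B$ are diagonal with $b_i\neq 0$, the pair $(A,B)$ is controllable and $\mbox{Ker}\,B=0$, and for each $\lambda_i$ one computes $N_1(\lambda_i)=B$ so that $\mbox{Im}\,N_1(\lambda_i)=\mathbb{C}^n$, whence condition~(\ref{eq:lemma1}) of Lemma~\ref{lem:dis_eig} holds trivially for every eigenvector. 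Hence a (possibly complex) feedback matrix $F$ exists via~(\ref{eq:dis_real}).

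Next I would confirm that the resulting closed-loop trajectory has the claimed limit. Writing the modal decomposition $x(t)=\sum_i (w_i^\top x(0)) e^{\lambda_i t} v_i$, where $w_i$ are the left-eigenvectors, the modes for $i\geq 3$ vanish as $t\to\infty$ because $\mbox{Re}(\lambda_i)<0$; the $i=1$ mode contributes the constant $c\mathbf{1}$ with $c=w_1^\top x(0)$; and the $i=2$ mode contributes $c'fe^{bjt}$ with $c'=w_2^\top x(0)$. This mirrors the argument of Proposition~\ref{prop:formation problem}, except that the surviving nonzero-frequency mode does not settle to a constant but persists as a bounded oscillation, which is exactly the desired circular motion.

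The one genuine subtlety is realness of $F$. Since $bj$ is purely imaginary, its conjugate $-bj$ is distinct from it (as $b\neq 0$), so to guarantee a real feedback matrix I must include $-bj$ in the spectrum as well, with eigenvector $\bar v_2 = \bar f$, and more generally arrange all complex eigenvalues into conjugate pairs with conjugate eigenvectors, as noted after~(\ref{eq:dis_real}). This forces $n$ to be at least, say, large enough to accommodate $0$, $\pm bj$, and the stable modes; for the mode $-bj$ with eigenvector $\bar f$ the contribution is $\overline{(w_2^\top x(0))}\,\bar f\,e^{-bjt}$, which combines with the $bj$ mode to keep $x(t)$ real when $x(0)$ is real, while for complex $x(0)$ the stated limit still holds with $c'$ absorbing the relevant coefficient. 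I expect this bookkeeping of conjugate pairs to ensure both realness and the precise form $c\mathbf{1}+c'fe^{bjt}$ to be the main point requiring care; the assignability itself is immediate from the diagonal structure.
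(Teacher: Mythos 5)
Your core argument is exactly the paper's: assign $\lambda_1=0$ with $v_1=\mathbf{1}$, $\lambda_2=bj$ with $v_2=f$, the rest distinct and stable, verify assignability via Lemma~\ref{lem:dis_eig} as in Theorem~\ref{thm:form} (since $N_1(\lambda_i)=B$ has full image), and read off the limit from the modal decomposition as in Proposition~\ref{prop:formation problem}.

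Your final paragraph on realness, however, is a misstep. In this paper the states $x_i$ are complex numbers encoding planar positions, so a complex $F$ is perfectly admissible (its entries are interpreted as scaling/rotating gains; see Example~1(i) and the simulation in Section~VII, where the circular-motion gain matrix has entries like $-8+5j$). More importantly, if you did force $-bj$ into the spectrum with eigenvector $\bar f$, the closed loop would generically retain a second persistent mode $c''\,\bar f\, e^{-bjt}$, which cannot be absorbed into $c' f e^{bjt}$ (different frequency, different direction), so the claimed limit $c\mathbf{1}+c'fe^{bjt}$ would fail for generic $x(0)$. The correct resolution is simply to drop that modification: assign $bj$ once, accept a complex $F$, and the proof is complete as in your first three paragraphs.
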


\begin{proof}
By a similar argument to that in the proof of Theorem~\ref{thm:form}, we can show that for (\ref{eq:linear system}) there always exists $F$ such that $(A+BF)$ has the following eigenstructure: 
\begin{align*}
&\mbox{eigenvalues: } \lambda_1=0, \lambda_2=bj, \lambda_3,\ldots,\lambda_n \mbox{ distinct } \notag\\
& \hspace{1.9cm} \mbox{and }  \mbox{Re}(\lambda_3),\ldots,\mbox{Re}(\lambda_n)<0 \notag\\
&\mbox{eigenvectors: } v_1 = {\bf 1}, v_2=f, \{v_1,v_2,...,v_n\} \mbox{ independent}
\end{align*}
Then (cf. Proposition~\ref{prop:formation problem}),
\begin{align*}
(\forall x(0)\in\mathbb{C}^n)(\exists c,c'\in\mathbb{C})
\lim_{t\to\infty} x(t)=c{\bf 1} + c' f e^{bjt}.
\end{align*}
That is, Problem~3 is solved.
\end{proof}

The key point to achieving circular motion is to assign one, and only one, pure imaginary eigenvalue $bj$, associated with the formation vector $f$. The circular motion is counterclockwise if $b>0$, and clockwise if $b<0$.
One may easily speed up or slow down the circular motion by specifying the value $|b|$. 

Also note that, by a similar synthesis procedure to that for rigid formation in the previous subsection, the multi-agent system~(\ref{eq:linear system}) can be made to achieve circular motion while keeping a rigid formation with some specified size $d >0$.

Circular motion may be applied to the task of target encircling, which is illustrated by an example in the next section.


\section{SIMULATIONS}

We illustrate the eigenstructure assignment based approach by several simulation examples. 
For all the examples, we consider the multi-agent system (\ref{eq:linear system}) with 5 heterogeneous agents, where 
\begin{align*}
A&=\mbox{diag}(1.6,4.7, 3.0,-0.7,-4.2) \\
B&=\mbox{diag}(0.2,1.5,-0.5,-3.3,-3.7). 
\end{align*}
Thus the first 3 agents are unstable while the latter 2 are stable; all agents are stabilizable.

First, to achieve a scalable (regular) pentagon formation, assign the following eigenstructure:
\begin{align*}
&\mbox{eigenvalues:~} \{\lambda_1,\ldots,\lambda_5\}=\{0,-1,-2,-3,-4\}\\
&\mbox{eigenvectors:~} [v_1 \cdots v_5]=
\begin{bmatrix}
\mathrm{e}^{\frac{2\pi j\times 1}{5}}& -1& 0& 0& 0\\
\mathrm{e}^{\frac{2\pi j\times 2}{5}}& 1& 0& 0& 0\\
\mathrm{e}^{\frac{2\pi j\times 3}{5}}& -2& -1& 1& 0\\
\mathrm{e}^{\frac{2\pi j\times 4}{5}}& -2&0&-1&0\\
\mathrm{e}^{\frac{2\pi j\times 5}{5}}&-2& 0& -2& 1
\end{bmatrix}.
\end{align*}
By (\ref{eq:dis_real}) we compute the control gain matrix
\begin{align*}
F =
\scalebox{0.7}{$\displaystyle
\begin{bmatrix}
-10.5 - 1.8164j& 2.5 - 1.8164j& 0&  0&  0\\
  0.3333 + 0.2422j&-3.4667 + 0.2422j& 0&  0&  0\\
 -0.4721 - 2.3511j&-0.4721 - 2.3511j&10& -2&  0\\
 -0.0442 - 0.4403j& 1.1679 - 0.4403j& 0&  0.697&  0\\
 -0.3979 + 0.4391j& 0.1427 + 0.4391j& 0& -0.5405& -0.0541
\end{bmatrix}.
$}
\end{align*}
Simulating the closed-loop system with initial condition $x(0) = [1+j \ \ 1-0.5j \ \ 1 \ \ j \ \ -1+j]^\top$, the result is displayed in Fig.~\ref{fig:pentagon}. Observe that a regular pentagon is formed, and the topology determined by $F$ contains a spanning tree.

\begin{figure}[t]
  	\centering
  	\includegraphics[width=0.45\textwidth]{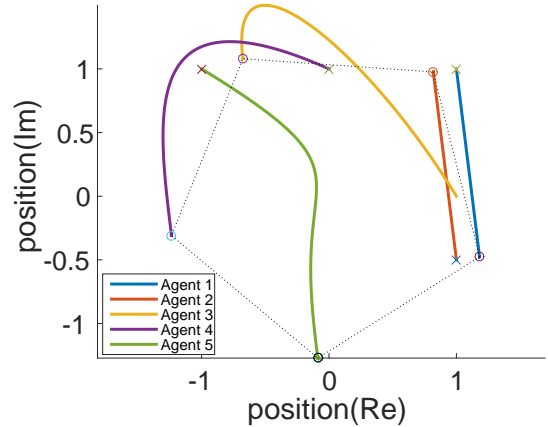}
	\caption{Scalable regular pentagon formation (x: initial positions, $\circ$: steady-state positions) \label{fig:pentagon}}
\end{figure}

\begin{figure}[t]
  	\centering
  	\includegraphics[width=0.45\textwidth]{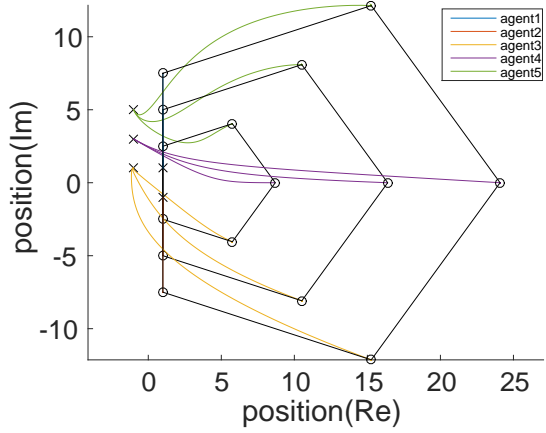}
	\caption{Rigid regular pentagon formation, with size $d=5,10,15$ (x: initial positions, $\circ$: steady-state positions) \label{fig:rigid}}
\end{figure}

Next, to achieve a rigid pentagon formation, we follow the method presented in Section~VI.A: first assign the following eigenstructure:
\begin{align*}
&\mbox{eigenvalues:~} \{\lambda_1,\ldots,\lambda_5\}=\{0,0,-1,-2,-3\}\\
&\mbox{eigenvectors:~} [v_1 \cdots v_5]=
\begin{bmatrix}
1&\mathrm{e}^{\frac{2\pi j\times 1}{5}}& 0& 0& 0\\
1&\mathrm{e}^{\frac{2\pi j\times 2}{5}}& 0& 0& 0\\
1&\mathrm{e}^{\frac{2\pi j\times 3}{5}}& 1& 0& 0\\
1&\mathrm{e}^{\frac{2\pi j\times 4}{5}}&0&1&0\\
1&\mathrm{e}^{\frac{2\pi j\times 5}{5}}& 0& 0& 1
\end{bmatrix}
\end{align*}
and by (\ref{eq:dis_real}) compute the control gain matrix
$$
F =
\scalebox{0.7}{$\displaystyle
\begin{bmatrix}
                 -8&                  0&   0&        0&        0\\
                  0 &  -3.1333 &   0&        0&        0\\
   0.618 + 1.9021j&  -2.618 - 1.9021j&   8&        0&        0\\
  -0.303 + 0.9326j&  -0.303 - 0.9326j&   0&   0.3939&        0\\
  -1.0614 + 0.7711j&   0.2506 - 0.7711j&   0&        0&  -0.3243
\end{bmatrix}.$}
$$
Thus the topology determined by $F$ is 2-rooted with nodes 1 and 2 the only two roots.
Then for different sizes ($d=5,10,15$), we obtain by (\ref{eq:rigidformation_control}) the control $u$.
Simulating the closed-loop system with the same initial condition $x(0)$ as above, the result is displayed in Fig.~\ref{fig:rigid}, where pentagons with specified sizes are formed.

Further, we consider the task of target encircling and solve it by circular motion introduced in Section VI.B.
Suppose that there is a static target, say $x_6$ with $\dot{x}_6=u_6$ ($u_6$ is constantly zero), and the goal is make the same 5 agents as above circle around $x_6$.  For this we treat the target $x_6$ as part of the multi-agent system; hence the augmented $A'$ and $B'$ are
\begin{align*}
A'&=\mbox{diag}(1.6,4.7, 3.0,-0.7,-4.2, 0) \\
B'&=\mbox{diag}(0.2,1.5,-0.5,-3.3,-3.7, 1). 
\end{align*}
Moreover, choose the following eigenstructure
\begin{align*}
&\scalebox{0.9}{$\displaystyle \mbox{eigenvalues:~} \{\lambda_1,\ldots,\lambda_6\}=\{0,j,-1,-2,-3,-4\}$}\\
&\mbox{eigenvectors:~} [v_1 \cdots v_6]=
\begin{bmatrix}
1&\mathrm{e}^{\frac{2\pi j\times 1}{5}}& 0& 0& 0& 0\\
1&\mathrm{e}^{\frac{2\pi j\times 2}{5}}& 1& 0& 0& 0\\
1&\mathrm{e}^{\frac{2\pi j\times 3}{5}}& 0& 1& 0& 0\\
1&\mathrm{e}^{\frac{2\pi j\times 4}{5}}& 0& 0& 1& 0\\
1&\mathrm{e}^{\frac{2\pi j\times 5}{5}}& 0& 0& 0& 1\\
1&0& 0& 0& 0& 0
\end{bmatrix}.
\end{align*}
Thus the desired formation is (again) a regular pentagon and the target $x_6$ is at the center of this pentagon. 
Moreover, by the eigenvectors the resulting topology will contain a spanning tree with $x_6$ (the target) being the root.
Corresponding to the formation vector is the eigenvalue $j$; hence the agents will perform circular motion at rate $1$. 
Since the center will not move ($x_6$ at the center is the root), this makes the first 5 agents encircle the target $x_6$.

By (\ref{eq:dis_real}) we compute the control gain matrix
$$
F =
\scalebox{0.6}{$\displaystyle 
\begin{bmatrix}
  -8 + 5j&    0&   0&        0&        0&          - 5j\\
  -0.428 + 0.84j& -3.8&   0&        0&        0&   1.0947 - 0.84j\\
   4.4116 - 0.7331j&    0&  10&        0&        0&  -8.4116 + 0.7331j\\
   0.5574 + 0.7795j&    0&   0&   0.697&        0&  -1.4664 - 0.7795j\\
  -0.5911 + 0.9447j&    0&   0&        0&  -0.0541&  -0.49 - 0.9447j\\
   0               &    0&   0&        0&        0&   0
\end{bmatrix}
$}
$$
to assign the above eigenstructure.
Indeed, the corresponding topology has a spanning tree whose root is $x_6$ and $\dot{x}_6=0$.
Simulating the closed-loop system with the initial condition $x(0) = [1-0.5j \ \  -2+2j \ \ -2+j \ \  -1+j \ \ -j \ \ 1+j]^\top$, the result is displayed in Fig.~\ref{fig:circular}. Observe that the target stays put at its initial position $1+j$, while other agents circle around it.

\begin{figure}[t]
  	\centering
  	\includegraphics[width=0.45\textwidth]{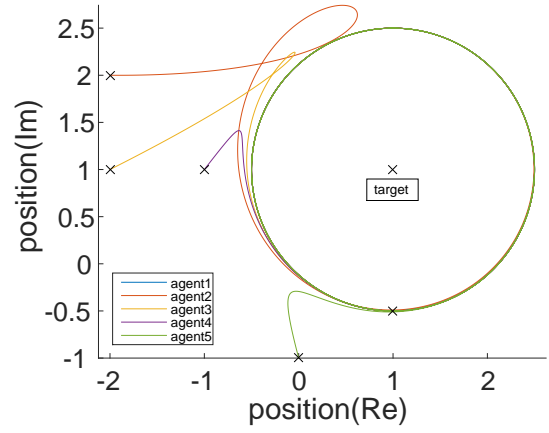}
	\caption{Target encircling by circular motion (x: initial positions, $\circ$: steady-state positions) \label{fig:circular}}
\end{figure}

Finally, we present an empirical study on the computation time of synthesizing feedback matrix $F$. In particular, we compare the centralized synthesis by (\ref{eq:dis_real}) and the hierarchical synthesis in Section~V; the result is listed in Table~\ref{tb:computation time} for different numbers of agents.\footnote{Computation is done by Matlab~R2014b on a laptop with Intel(R) Core(TM) i7-4510U CPU@2.00GHz 2.60GHz and 8.00GB memory.} Here for the hierarchical synthesis, we partition the agents in such a way that the number of groups and the number of agents in each group are `balanced' (to make $\tilde{n}$ small): e.g. 100 agents are partitioned into 10 groups of 10 agents each; 500 agents are partitioned into 16 groups of 23 agents each {\it plus} 6 groups of 22 each. 
Observe that the hierarchical synthesis is significantly more efficient than the centralized one, and the efficiency increases as the number of agents increases. In particular, for 1000 agents only 0.525 seconds needed, the hierarchical approach might well be sufficient for many practical purposes.

\begin{table}[t]
	\centering
	\caption{Comparison of computation time (unit: seconds)}
	\begin{tabular}{|c|c|c|} \hline
	agent \#& centralized method by (\ref{eq:dis_real}) & hierarchical method in Sec.~V\\
	\hline
	100&0.398&0.027 \\ \hline
	500&57.308&0.179 \\ \hline
	900&552.8419&0.394 \\ \hline
	1000&1068.729&0.525 \\ \hline
	\end{tabular}
	\label{tb:computation time}
\end{table}



\section{Concluding Remarks}

We have proposed a top-down, eigenstructure assignment based approach to synthesize state feedback control for solving multi-agent formation problems. The relation between the eigenstructures used in control synthesis and the resulting topologies among agents has been characterized, and special topologies have been designed by choosing appropriate eigenstructures. More general cases where the initial inter-agent coupling is arbitrary and/or there exist non-stabilizable agents have been studied, and a hierarchical synthesis procedure has been presented that improves computational efficiency. Further, the approach has been extended to achieve rigid formation and circular motion.  

In our view, the proposed top-down approach to multi-agent formation control is {\it complimentary} to the existing (mainstream) bottom-up approach (rather than {\it opposed} to). Indeed, the bottom-up approach, if successful, can produce scalable control strategies effective for possibly time-varying topologies, nonlinear agent dynamics, and robustness issues like communication failures, which are the cases very difficult to be dealt with by the top-down approach.  On the other hand, bottom-up design is generally challenging, requiring significant insight into the problem at hand and possibly many trial-and-errors in the design process; by contrast, top-down design is straightforward and can be automated by algorithms. Hence we suggest the following. When a control researcher or engineer faces a distributed control design problem for achieving some new cooperative tasks, one can start with a linear time-invariant version of the problem and try the top-down approach to derive a solution. With the ideas and insights gained from such a solution, one may then try the bottom-up design possibly for time-varying and nonlinear cases.

In future work, we aim to apply the top-down, eigenstructure assignment based approach to solve more complex cooperative control problems of multi-agent systems.  
In particular, our immediate goals are to achieve formations in three dimensions with obstacle avoidance abilities, as well as to deal with agents with higher-order (heterogeneous, possibly non-stabilizable) dynamics.


\section{Appendix}

We provide the proof of Proposition~\ref{prop:line topology}. 
For this we first briefly review the eigenstructure assignment method in \cite{Moore:77} for dealing with repeated eigenvalues and generalized eigenvectors.

\begin{lem}(\cite{Moore:77}) 
\label{lem:rep_eig}
Consider the system~(\ref{eq:standard dynamics}) and suppose that $(A,B)$ is controllable and Ker$B=0$. 
Let $\{d_1,\ldots,d_k\}$ ($k \leq n$) be a set of positive integers satisfying $\sum_{i=1}^k d_i =n$, 
$\{\lambda_1,\ldots,\lambda_k\}$ be a set of complex numbers, and $\{v_{11},...,v_{1d_1},\ldots,v_{k1},...,v_{kd_k}\}$ a set of linearly independent vectors in $\mathbb{C}^n$. Then there is a feedback matrix $F$ such that for every $i\in[1,k]$, 
\begin{align*}
&(\lambda_iI-A-BF)v_{i1}=0\\
&(\lambda_iI-A-BF)v_{ij}=v_{i(j-1)},~~~j=2,\ldots,d_i
\end{align*}
if and only if 
\begin{align} \label{eq:lemma2}
&(\exists w_{ij} \in \mathbb{C}^m) \notag\\ 
&\begin{bmatrix}
\lambda_i I - A \ \ -B
\end{bmatrix}
\begin{bmatrix}
  v_{i1}(\lambda_i) \\
  w_{i1}(\lambda_i)
\end{bmatrix} = 0 \notag \\
&\begin{bmatrix}
\lambda_i I - A \ \ -B
\end{bmatrix}\begin{bmatrix}
  v_{ij}(\lambda_i) \\
  w_{ij}(\lambda_i)
\end{bmatrix} = v_{i(j-1)},~~~j=2,\ldots,d_i.
\end{align}
\end{lem}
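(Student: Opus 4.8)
The plan is to establish the equivalence by a direct change-of-basis construction, exactly in the spirit of the proof of Lemma~\ref{lem:dis_eig} but with single eigenvectors replaced by Jordan chains. The guiding observation is that $F$ enters the chain equations only through the products $Fv_{ij}$; accordingly I will identify the auxiliary vectors $w_{ij}$ appearing in (\ref{eq:lemma2}) with these products. For necessity, suppose a feedback $F$ realizing the prescribed chains exists, and set $w_{ij} := Fv_{ij}$ for all $i \in [1,k]$ and $j \in [1,d_i]$. Then the left-hand side of each equation in (\ref{eq:lemma2}) equals $(\lambda_i I - A)v_{ij} - BFv_{ij} = (\lambda_i I - A - BF)v_{ij}$, which is $0$ for $j=1$ and $v_{i(j-1)}$ for $j \geq 2$ by hypothesis. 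Hence these $w_{ij}$ satisfy (\ref{eq:lemma2}), and the forward implication follows with no further computation.

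For sufficiency, suppose vectors $w_{ij} \in \mathbb{C}^m$ satisfying (\ref{eq:lemma2}) exist. Because $\sum_{i=1}^k d_i = n$, the chain vectors $\{v_{11},\ldots,v_{1d_1},\ldots,v_{k1},\ldots,v_{kd_k}\}$ are exactly $n$ linearly independent vectors in $\mathbb{C}^n$, so the matrix $V := [v_{11} \cdots v_{kd_k}]$ is invertible. Assembling the hypothesized vectors column-wise into $W := [w_{11} \cdots w_{kd_k}] \in \mathbb{C}^{m \times n}$, I would define the feedback $F := WV^{-1}$, so that $FV = W$ and therefore $Fv_{ij} = w_{ij}$ for every pair $(i,j)$. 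Substituting back gives $(\lambda_i I - A - BF)v_{ij} = (\lambda_i I - A)v_{ij} - Bw_{ij}$, which by (\ref{eq:lemma2}) equals $0$ when $j=1$ and $v_{i(j-1)}$ when $j \geq 2$. Thus $A+BF$ carries precisely the prescribed Jordan chains, completing the reverse implication.

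The construction is elementary, so I do not anticipate a genuine obstacle; the only care needed is in the index bookkeeping across $k$ chains of possibly different lengths $d_i$, and in confirming that the $n$ chain vectors form a basis of $\mathbb{C}^n$ --- which is exactly where the counting hypothesis $\sum_i d_i = n$ together with the assumed linear independence guarantees that $V$ is nonsingular. I would also point out that the standing hypotheses serve supporting roles rather than entering the core argument: $\mbox{Ker}\,B = 0$ makes each $w_{ij}$ the unique solution of $Bw_{ij} = (\lambda_i I - A)v_{ij} - v_{i(j-1)}$ (with the convention $v_{i0}:=0$), while controllability, equivalently full row rank of $[\lambda I - A \ \ B]$ for every $\lambda$ by the Hautus test, is what ensures the kernel representation underlying (\ref{eq:lemma2}) is available for each prescribed $\lambda_i$. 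Finally, as in Lemma~\ref{lem:dis_eig}, if the data are self-conjugate --- each non-real $\lambda_i$ paired with its conjugate and the corresponding chains conjugated --- then $F = WV^{-1}$ is real.
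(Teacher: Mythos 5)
Your proof is correct: the equivalence is exactly the observation that $F$ enters the chain equations only through $Fv_{ij}$, so setting $w_{ij}:=Fv_{ij}$ gives necessity, and $F:=WV^{-1}$ (well defined since the $n$ chain vectors are independent and $\sum_i d_i = n$) gives sufficiency. The paper states this lemma as a citation from Moore (1977) and does not prove it, but the construction procedure it records immediately afterwards --- and applies in the Appendix, where the gain is computed precisely as $F = WV^{-1}$ with $Fv_{ij}=w_{ij}$ --- is exactly your sufficiency argument, so your route matches the intended one.
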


Lemma~\ref{lem:rep_eig} provides a necessary and sufficient condition for assigning repeated eigenvalues $\lambda_i$ ($i \in [1,k]$) with eigenvector $v_{i1}$ and generalized eigenvectors $v_{i2},...,v_{i d_i}$ (corresponding to a Jordan block of the closed-loop matrix $A+BF$).
When the condition holds and thus $F$ exists, $F$ may be constructed by the following procedure~\cite{Moore:77}.

\medskip

(i) Compute the following (maximal-rank) matrices
\begin{align*}
N_i=
\begin{bmatrix}
N_{1i}\\
N_{2i}
\end{bmatrix},\ \ \ 
S_i=
\begin{bmatrix}
S_{1i}\\
S_{2i}
\end{bmatrix}
\end{align*}
satisfying
\begin{align*}
[\lambda_iI-A~B]
\begin{bmatrix}
N_{1i}\\
N_{2i}
\end{bmatrix}=0,\ \ \ 
[\lambda_iI-A~B]
\begin{bmatrix}
S_{1i}\\
S_{2i}
\end{bmatrix}=-I.
\end{align*}
($S_i$ needs to be computed only when $d_i>1$.)

(ii) From the following vector chain
\begin{align*}
v_{i1}&=N_{1i}p_{i1}\\
v_{i2}&=N_{1i}p_{i2}-S_{1i}v_{i1}\\
&\ \vdots\\
v_{i d_i}&=N_{1i}p_{i d_i}-S_{1i}v_{i(d_i-1)}
\end{align*}
find the vectors $p_{i1},...,p_{id_i}$. Then
generate a new vector chain as follows:
\begin{align*}
w_{i1}&=-N_{2i}p_{i1}\\
w_{i2}&=-N_{2i}p_{i2}+S_{2i}v_{i1}\\
&\ \vdots\\
w_{id_i}&=-N_{2i}p_{id_i}+S_{2i}v_{i(d_i-1)}.
\end{align*}

(iii) Compute $F$ satisfying $Fv_{ij}=w_{ij}$ for all $i\in [1,k], j\in [1,d_i]$. (If no solution exists, alter one or more of the vectors $p_{ij}$ in Step~(ii).)

Now we are ready to prove Proposition~\ref{prop:line topology}.

{\it Proof of Proposition~\ref{prop:line topology}:} 
Consider the multi-agent system~(7) and assign the following eigenstructure:
\begin{align*}
&\mbox{eigenvalues: }  \lambda_1=0, \lambda_2 = \cdots = \lambda_n = -1 \notag\\
&\mbox{eigenvectors: }  [v_{11} \ v_{21} \cdots v_{2(n-1)}] =
\scalebox{0.7}{$\displaystyle
\left[
\begin{array}{cccc}
f_1	&0			&\cdots	&0 \\
f_2	&0			&\cdots	        &-f_2 \\
\vdots	 &\vdots	&	&\vdots \\
f_{n-1}	&0		&\reflectbox{$\ddots$}	        &-f_{n-1} \\
f_n	&-f_n			&\cdots	&-f_n
\end{array}
\right]
$}
\end{align*}

First, for $\lambda_1=0$ (with $d_1=1$), in Step~(i) we only need to compute $N_1$ and 
obtain $N_{11}=B$, $N_{21}=A$. Then in Step~(ii), from 
\begin{align*}
v_{11} = N_{11}p_{11}
\end{align*}
we have $p_{11}=B^{-1}v_{11}$. Hence 
\begin{align*}
w_{11} = -N_{21}p_{11} = -AB^{-1}v_{11} = [-\frac{a_1f_1}{b_1} \cdots -\frac{a_nf_n}{b_n}]^\top.
\end{align*}
It is verified that $w_{11}$, together with $v_{11}, \lambda_1$, satisfies the first equation of (\ref{eq:lemma2}).

Second, for $\lambda_2=-1$ (with $d_2=n-1$), in Step~(i) we derive $N_{12} = B$, $N_{22} = (A+I)$, $S_{12} = B$, 
$S_{22} = (A+I)-B^{-1}$. Then in Step~(ii), from the chain 
{\small \begin{align*}
v_{21}&=N_{12}p_{21}\\
v_{22}&=N_{12}p_{22}-S_{12}v_{21}\\
&\ \vdots\\
v_{2 (n-1)}&=N_{12}p_{2(n-1)}-S_{12}v_{2(n-2)}
\end{align*}
we find 
\begin{align*}
p_{21}&=[0 \cdots 0 \ \ -\frac{f_n}{b_n}]^\top\\
p_{22}&=[0 \cdots 0 \ \ -\frac{f_{n-1}}{b_{n-1}} \ \ -\frac{f_n}{b_n}-f_n]^\top\\
&\ \vdots\\
p_{2 (n-1)}&=[0 \ \ -\frac{f_{2}}{b_{2}} \ \ -\frac{f_3}{b_3}-f_3 \ \cdots \ -\frac{f_n}{b_n}-f_n]^\top.
\end{align*}
Hence we obtain
\begin{align*}
w_{21}&=N_{12}p_{21}\\
           &=[0 \cdots 0 \ \ \frac{(a_n+1)f_n}{b_n}]^\top\\
w_{22}&=N_{12}p_{22}-S_{12}v_{21}\\
           &=[0 \cdots 0 \ \ \frac{(a_{n-1}+1)f_{n-1}}{b_{n-1}} \ \ \frac{a_n f_n}{b_n}]^\top\\
&\ \vdots\\
w_{2 (n-1)}&=N_{12}p_{2(n-1)}-S_{12}v_{2(n-2)} \\
                 &\hspace{-0.9cm}=[0 \ \ \frac{(a_2+1)f_{2}}{b_{2}} \ \ \frac{a_3 f_3}{b_3} \ \cdots \ \frac{a_n f_n}{b_n}]^\top.
\end{align*}}
The chain $\{w_{21}, w_{22},..., w_{2(n-1)}\}$, together with $\{v_{21}, v_{22},..., v_{2(n-1)}\}$ and $\lambda_2 (=-1)$, is verified to satisfy (\ref{eq:lemma2}). Hence it follows from Lemma~\ref{lem:rep_eig} that the above eigenstructure can be assigned to the closed-loop matrix $A+BF$. Then by Proposition~\ref{prop:formation problem}, Problem~\ref{prob:formation problem} is solved. 

Finally we compute the feedback matrix $F$. 
Let $W = [w_{11} \ w_{21} \cdots w_{2(n-1)}]$. Since
\begin{align*}
V^{-1}=
\begin{bmatrix}
\frac{1}{f_1}	&0		&\cdots			&0			&0		\\
0		&0		&\cdots			&\frac{1}{f_{n-1}}	&-\frac{1}{f_n}	\\
		&		&\reflectbox{$\ddots$}	&\reflectbox{$\ddots$}	&		\\
0		&\frac{1}{f_2}		&\reflectbox{$\ddots$}			&			&		\\
\frac{1}{f_1}	&-\frac{1}{f_2}	&			&			&0		\\
\end{bmatrix}
\end{align*}
we obtain by $F=WV^{-1}$ that
\begin{align*}
F=
\begin{bmatrix}
\frac{-a_1}{b_1}	&0			&\cdots			&0			\\
\frac{f_2}{b_2f_1}	&\frac{-(1+a_2)}{b_2}	&			&			\\
			&\ddots			&\ddots			&			\\
0			&			&\frac{f_n}{b_nf_{n-1}}	&\frac{-(1+a_n)}{b_n}	
\end{bmatrix}.
\end{align*}
Therefore the closed-loop matrix is
\begin{align*}
A+BF =
\begin{bmatrix}
0 &0			&\cdots			&0			\\
\frac{f_2}{f_1}	&-1	&			&			\\
			&\ddots			&\ddots			&			\\
0			&			&\frac{f_n}{f_{n-1}}	&-1	
\end{bmatrix}
\end{align*}
and the corresponding graph $\mathcal{G}$ is a line topology. \hfill $\blacksquare$

\bibliographystyle{IEEEtran}
\bibliography{DistributedControl}

\begin{thebibliography}{10}
\providecommand{\url}[1]{#1}
\csname url@samestyle\endcsname
\providecommand{\newblock}{\relax}
\providecommand{\bibinfo}[2]{#2}
\providecommand{\BIBentrySTDinterwordspacing}{\spaceskip=0pt\relax}
\providecommand{\BIBentryALTinterwordstretchfactor}{4}
\providecommand{\BIBentryALTinterwordspacing}{\spaceskip=\fontdimen2\font plus
\BIBentryALTinterwordstretchfactor\fontdimen3\font minus
  \fontdimen4\font\relax}
\providecommand{\BIBforeignlanguage}[2]{{%
\expandafter\ifx\csname l@#1\endcsname\relax
\typeout{** WARNING: IEEEtran.bst: No hyphenation pattern has been}%
\typeout{** loaded for the language `#1'. Using the pattern for}%
\typeout{** the default language instead.}%
\else
\language=\csname l@#1\endcsname
\fi
#2}}
\providecommand{\BIBdecl}{\relax}
\BIBdecl

\bibitem{JadLinMor:03}
A.~Jadbabaie, J.~Lin, and A.~S. Morse, ``Coordination of groups of mobile
  autonomous agents sing nearest neighbor rules,'' \emph{IEEE Trans. Autom.
  Control}, vol.~48, no.~6, pp. 988--1001, 2003.

\bibitem{SaFaMu:07}
R.~Olfati-Saber, J.~A. Fax, and R.~M. Murray, ``Consensus and cooperation in
  networked multi-agent systems,'' \emph{Proc. IEEE}, vol.~95, no.~1, pp.
  215--233, 2007.

\bibitem{RenBea:08}
W.~Ren and R.~W. Beard, \emph{Distributed Consensus in Multi-vehicle
  Cooperative Control: Theory and Applications}.\hskip 1em plus 0.5em minus
  0.4em\relax Springer Verlag, 2008.

\bibitem{BulCorMar:09}
F.~Bullo, J.~Cort\'es, and S.~Mart{\'\i}nez, \emph{Distributed Control of
  Robotic Networks}.\hskip 1em plus 0.5em minus 0.4em\relax Princeton
  University Press, 2009.

\bibitem{MesEge:10}
M.~Mesbahi and M.~Egerstedt, \emph{Graph Theoretic Methods in Multiagent
  Networks}.\hskip 1em plus 0.5em minus 0.4em\relax Prenceton University Press,
  2010.

\bibitem{AndYuFidHen:08}
B.~Anderson, C.~Yu, B.~Fidan, and J.~M. Hendrickx, ``Rigid graph control
  architectures for autonomous formations,'' \emph{IEEE Control Syst. Mag.},
  vol.~28, no.~6, pp. 48--63, 2008.

\bibitem{FaxMur:04}
J.~A. Fax and R.~M. Murray, ``Information flow and cooperative control of
  vehicle formations,'' \emph{IEEE Trans. Autom. Control}, vol.~49, no.~9, pp.
  1465--1476, 2004.

\bibitem{Cor:09}
J.~Cort\'es, ``Global and robust formation-shape stabilizatioin of relative
  sensing networks,'' \emph{Automatica}, vol.~45, no.~10, pp. 2754--2762, 2009.

\bibitem{KriBroFra:09}
L.~Krick, M.~Broucke, and B.~Francis, ``Stabilisation of infinitesimally rigid
  formations of multi-robot networks,'' \emph{Int. J. Control}, vol.~82, no.~3,
  pp. 423--439, 2009.

\bibitem{CaoMorYuAnd:11}
M.~Cao, A.~S. Morse, C.~Yu, and B.~Anderson, ``Maintaining a directed,
  triangular formation of mobile autonomous agents,'' \emph{Commun. Inform. and
  Syst.}, vol.~11, no.~1, pp. 1--16, 2011.

\bibitem{BasBisJen:10}
M.~Basiri, A.~Bishop, and P.~Jensfelt, ``Distributed control of triangular
  formations with angle-only constraints,'' \emph{Syst. \& Control Lett.},
  vol.~59, no.~2, pp. 147--154, 2010.

\bibitem{CooArc:12}
S.~Coogan and M.~Arcak, ``Scaling the size of a formation using relative
  position feedback,'' \emph{Automatica}, vol.~48, no.~10, pp. 2677--2685,
  2012.

\bibitem{LinWanHanFu:14}
Z.~Lin, L.~Wang, Z.~Han, and M.~Fu, ``Distributed formation control of
  multi-agent systems using complex laplacian,'' \emph{IEEE Trans. Autom.
  Control}, vol.~59, no.~7, pp. 1765--1777, 2014.

\bibitem{BaiArcWen:08}
H.~Bai, M.~Arcak, and J.~T. Wen, ``Adaptive design for reference velocity
  recovery in motion coordination,'' \emph{Syst. \& Control Lett.}, vol.~57,
  no.~8, pp. 602--610, 2008.

\bibitem{DinYanLin:10}
W.~Ding, G.~Yan, and Z.~Lin, ``Collective motions and formations under pursuit
  strategies on directed acyclic graphs,'' \emph{Automatica}, vol.~46, no.~1,
  pp. 174--181, 2010.

\bibitem{Moore:76}
B.~C. Moore, ``On the flexibility offered by state feedback in multivariable
  systems beyond closed loop eigenvalue assignment,'' \emph{IEEE Trans. Autom.
  Control}, vol.~21, no.~5, pp. 689--692, 1976.

\bibitem{Moore:77}
G.~Klein and B.~C. Moore, ``Eigenvalue-generalized eigenvector assignment with
  state feedback,'' \emph{IEEE Trans. Autom. Control}, vol.~22, no.~1, pp.
  140--141, 1977.

\bibitem{AndShaChu:83}
A.~N. Andry, E.~Y. Shapiro, and J.~C. Chung, ``Eigenstructure assignment for
  linear systems,'' \emph{IEEE Trans. Aerospace and Electronic Systems}, vol.
  AES-19, no.~5, pp. 711--729, 1983.

\bibitem{LiuPat:98}
G.~P. Liu and R.~J. Patton, \emph{Eigenstructure Assignment for Control System
  Design}.\hskip 1em plus 0.5em minus 0.4em\relax Wiley, 1998.

\bibitem{CaiMot:15}
K.~Cai and T.~Motoyama, ``Eigenstructure assignment for synthesis of
  multi-agent consensus algorithms,'' in \emph{Proc. 58th Japan Joint Autom.
  Control Conf.}, Kobe, Japan, 2015, pp. 2B1--2.

\bibitem{MotCai:ACC17}
T.~Motoyama and K.~Cai, ``Top-down synthesis of multi-agent formation control:
  an eigenstructure assignment based approach,'' in \emph{Proc. of American
  Control Conf.}, Seattle, WA, 2017, pp. 259--264.

\bibitem{WuIwa:15}
A.~Wu and T.~Iwasaki, ``Eigenstructure assignment with applicaiton to consensus
  of linear heterogeneous agents,'' in \emph{Proc. 54th Conf. Decision and
  Control}, Osaka, Japan, 2015, pp. 2067--2072.

\bibitem{Won:67}
W.~M. Wonham, ``On pole assignment in multi-input controllable linear
  systems,'' \emph{IEEE Trans. Autom. Control}, vol.~12, no.~6, pp. 660--665,
  1967.

\bibitem{GolLoa:96}
G.~H. Golub and C.~F. van Loan, \emph{Matrix Computations}.\hskip 1em plus
  0.5em minus 0.4em\relax The Johns Hopkins University Press, 3rd ed, 1996.

\bibitem{KimSug:07}
T.~Kim and T.~Sugie, ``Cooperative control for target-capturing task based on a
  cyclic pursuit strategy,'' \emph{Automatica}, vol.~43, no.~8, pp. 1426--1431,
  2007.

\bibitem{LanYanLin:10}
Y.~Lan, G.~Yan, and Z.~Lin, ``Distributed control of cooperative target
  enclosing based on reachability and invariance analysis,'' \emph{Syst. \&
  Control Lett.}, vol.~59, no.~7, pp. 381--389, 2010.

\end{thebibliography}

\end{document}